\pgfplotsset{compat=newest}
\newtheorem{theorem}{Theorem}
\newtheorem{lemma}{Lemma}
\newtheorem{proposition}{Proposition}
\theoremstyle{definition}
\newtheorem{corollary}{Corollary}
\newtheorem{definition}{Definition}
\newcommand{\BE}{\mathsf{E}}
\newcommand{\BP}{\mathsf{P}}
\newcommand{\BR}{\mathbb{R}}
\newcommand{\BN}{\mathbb{N}}
\renewcommand{\d}{\mathrm{d}}
\def\ee{\mathrm{e}}
\newcommand{\ve}{\varepsilon}
\newcommand{\rn}[1]{\romannumeral #1}
\newcommand{\Rn}[1]{\expandafter\@slowromancap\romannumeral #1@}
\numberwithin{equation}{section}
\title{The Impact of Connectivity on the Production and Diffusion of Knowledge}
\author{}
\date{\today}
\author{Gustavo Manso\,$^{\dagger}$ and Farzad Pourbabaee\,$^{\ddagger}$
\thanks{$^{\dagger}$Haas School of Business, University of California Berkeley: \href{mailto:manso@berkeley.edu}{\nolinkurl{manso@berkeley.edu}}.}
\thanks{$^{\ddagger}$College of Business, University of Central Florida: \href{mailto:farzad.pourbabaee@ucf.edu}{\nolinkurl{farzad.pourbabaee@ucf.edu}}.}}
\renewcommand\footnotemark{}
\begin{document}
\maketitle

\begin{abstract}
    We study a social bandit problem featuring production and diffusion of knowledge. While higher connectivity enhances knowledge diffusion, it may reduce knowledge production as agents shy away from experimentation with new ideas and free-ride on the observation of other agents. As a result, under some conditions, greater connectivity can lead to homogeneity and lower social welfare.
\end{abstract}
\clearpage

\setstretch{1.35}
\tableofcontents
\clearpage

\interfootnotelinepenalty=10000
\medmuskip= 0.5mu plus 1.0mu minus 1.0mu

\section{Introduction}

Advances in travel and communication technologies have cleared the way for more connected organizations and societies. In well-connected structures, new ideas spread quickly leading to rapid innovation adoption. 

While such enhanced knowledge diffusion is beneficial, it may come at the cost of reduced knowledge production. When an organization or society is well-connected, agents may shy away from experimentation with new ideas, since they can easily see the results of the experimentation efforts of other agents and adapt their actions accordingly. Because of this free-riding, more connected organizations or societies may become homogeneous, converging on an inferior technology, and having lower overall welfare than less connected organizations or societies.

We study this tension between knowledge diffusion and knowledge production in a simple two-period social bandit model. In each period, each agent has the choice between exploiting a safe well-known action or exploring a risky novel action. At the end of the first period, each agent observes the outcome of a randomly selected group of agents. We show that in equilibrium social welfare is not necessarily increasing in connectivity between agents. That is, in a better connected society or organization, in which each agent is likely to meet with a greater number of agents, the costs of free-riding on knowledge production may dominate the benefits of connectivity on knowledge diffusion, leading to lower social surplus.

We begin our analysis in Section~\ref{sec: 2player_Econ} with a two-player economy. In this economy, equilibrium features three different regions based on initial beliefs: (\rn{1}) both agents exploit; (\rn{2}) one agent exploits, while the other explores; (\rn{3}) both agents explore (Propositions~\ref{prop: 2_players_full_sharing} and~\ref{prop: 2_players_imperfect_comm}). Due to free-riding, there is over-exploitation and under-exploration relative to the social optimum (Proposition~\ref{prop: 2_player_planner}). Moreover, equilibrium social surplus is non-monotonic in the connection probability between the two agents. For some intermediate levels of connectivity, an increase in the connection probability leads to lower equilibrium social surplus.

In Section~\ref{sec: many_players_Econ}, we study the $n$-agent economy, where in the second period each agent observes the exploration outcomes of a random group of contacts (denoted by $M$). The equilibria mirror the two-agent case: for high (respectively, low) initial beliefs, all players explore (respectively, exploit) in equilibrium (Theorem~\ref{thm: exploration_equil}). For intermediate beliefs, the pure-strategy equilibrium is asymmetric, with some players exploring and the rest exploiting (Theorem~\ref{thm: assymetric_equil}), and there is a unique symmetric mixed-strategy equilibrium in which agents explore independently with equal probability (Theorem~\ref{thm: symmetric_mixed_equil}).

As the distribution of $M$ increases in the sense of first-order stochastic dominance, free-riding incentives strengthen and the exploration threshold rises. This mirrors the comparative statics in~\cite{keller2005strategic}, who study how increasing the \emph{number} of agents in a fully connected bandit game affects exploration behavior. Our framework additionally allows for a richer comparative static: mean-preserving spreads in the distribution of $M$ encourage more private exploration and thus lower the equilibrium exploration threshold.

The equilibrium results of Section~\ref{sec: many_players_Econ} apply to any ensemble of random networks. In Section~\ref{sec: limit}, we apply them to economies with \textit{local} and \textit{global} connections based on Erd\H{o}s-R\'enyi (ER) random graphs \citep{erd6s1960evolution}, where every pair of agents is connected independently with the same probability of $\lambda/n$, where $\lambda$ denotes the expected number of connections per agent.

In the local ER economy, each agent only observes her \emph{immediate} neighbors' outcomes; in the global case, her observable circle extends to all agents reachable via any path. For local economies, the tractability of Binomial processes yields a closed-form asymptotic expression for the pure-strategy equilibrium fraction of explorers (Proposition~\ref{prop: equil_num_explr_agents}). The mixed-strategy equilibrium exploration probability converges to the same limit---an instance of the \emph{exact law of large numbers}~\citep{sun2006exact} (Proposition~\ref{prop:equil_mix_prob}). In the global case, the exploration region tightens sharply as $\lambda$ crosses $1$, driven by the phase transition in ER graphs from components of size $O(\log n)$ to $O(n)$. This effect is more pronounced for radical innovation. We derive the scaling limits of the exploration threshold and show it behaves asymmetrically around $\lambda=1$ (Theorem~\ref{thm:scaling}).

In Section~\ref{sec: social_surplus}, we study the pure- and mixed-strategy equilibrium social surplus in the local ER economy. As in the two-player case, the pure-strategy equilibrium social surplus is not monotone in connectivity (Proposition~\ref{prop: equil_soc_surplus_discont}), and the equilibrium exhibits over-exploitation and under-exploration relative to the social optimum (Theorem~\ref{thm: social_optimum}).

Higher connectivity exacerbates free-riding, as agents who observe others' experimentation reduce their own exploration. The number of explorers weakly decreases in $\lambda$: constant within each equilibrium region and dropping by one at equilibrium regime-change thresholds, where the social surplus falls discontinuously. As $n\to\infty$, these drops vanish and the limiting per-capita surplus becomes weakly increasing and continuous in $\lambda$ (Proposition~\ref{prop: asympt_equil_soc_surplus}). For intermediate initial beliefs, it even becomes \textit{constant} above a connectivity threshold, where the informational gain from one additional explorer precisely offsets the exploration cost. 

In Section~\ref{subs:optimal_mixing}, we characterize the solution to the planner's problem under mixed strategies via an intuitive first-order condition (Lemma~\ref{lem:surplus_conc} and Theorem~\ref{thm:optimal_mu}), and use stochastic order techniques to show that the equilibrium mixing probability always falls below the social optimum. The paper concludes in Section~\ref{sec:conclusion}.

\paragraph{Related Literature.} In his seminal work, ~\cite{rothschild1974two} studies the \textit{single-agent} experimentation problem in the two-armed bandit environment and shows that the agent settles on the sub-optimal arm with positive probability. The literature on multi-agent strategic experimentation begins with \cite{bolton1999strategic} and \cite{keller2005strategic},\footnote{A non-exhaustive list of related papers in strategic bandits includes \cite{heidhues2015strategic}, \cite{keller2015breakdowns}, \cite{bonatti2017learning}, and \cite{pourbabaee2020robust}.} where players are completely connected and each observes the outcomes of \textit{all} others. Our paper interpolates between these two extremes by considering agents who are neither fully connected nor fully isolated, allowing us to explicitly study how connectivity shapes exploration incentives.

While \cite{keller2005strategic} show that the exploration threshold increases with the number of players under complete observability, we show that it increases with the connectivity parameter under partial observability. More importantly, by varying connectivity while holding $n$ fixed, we uncover a non-monotonicity of equilibrium social surplus in connectivity—a relationship that cannot be examined under complete observability, where connectivity is mechanically determined by the number of agents. We further extend our analysis to mean-preserving spreads (second-order stochastic dominance) in the distribution of contacts, a type of comparative statics unavailable in settings with deterministic observability.

Our work is closely related to two papers studying experimentation in networks. \cite{board2024experimentation} shows that social surplus is single-peaked in network density. While we share their finding that welfare is non-monotonic in connectivity, we work with random Erd\H{o}s-R\'enyi graphs ($p=\lambda/n$), which enables tractable asymptotics and a clean separation between local and global information regimes, and we establish additional results on discontinuous welfare drops at regime-change thresholds and conditions under which limiting surplus becomes constant in connectivity. \cite{salish2015learning} studies a multi-period version of our model with richer strategic dynamics. We opt for a two-period framework, which allows us to derive closed-form exploration thresholds, characterize the limiting fraction of explorers as $n\to\infty$, and provide complete equilibrium-versus-optimum welfare comparisons—while still capturing the core tensions between knowledge production and diffusion.

\cite{bala1998learning}, \cite{gale2003bayesian} and \cite{sadler2020innovation} study the social learning dynamics of \textit{myopic} agents who are connected in networks and collect information from their neighbors to maximize their \textit{short-run} payoff. Our two-period experimentation framework is a first stab to depart from these works by letting agents to have long-run incentives in their strategic interactions.

Topics such as long-run social conformity and information aggregation in multi-agent strategic experimentation—when agents observe actions rather than payoffs—are studied in \cite{chamley1994information}, \cite{aoyagi1998mutual}, \cite{rosenberg2007social}, \cite{rosenberg2009informational}, and \cite{camargo2014learning}. Beyond the difference in observability, our paper focuses on how connectivity shapes equilibrium strategies and social welfare, rather than on long-run conformity or social learning.

Our paper also relates to the broader literature on games with information sharing and externalities. \cite{duffie2009information} studies a continuum economy where individuals endowed with informative signals incur costly search to meet and share information. \cite{wolitzky2018learning} investigates social learning and innovation adoption where agents arrive continuously over time, learn from a random sample of past outcomes, and make once-and-for-all decisions. In a Poisson news setting, \cite{frick2024innovation} studies how the arrival rate of a public signal—which depends on the mass of current adopters—affects innovation adoption. Somewhat related to the tension between private experimentation and information sharing, \cite{gordon2021observation} show that increasing the observation delay among team members working on a joint project raises individual effort. In collective search settings, \cite{lomys2023collective} studies the relationship between learning outcomes and the structure of the connection graph, and~\cite{cetemen2023collective} study how the outcomes depend on the members' complementarity and alliances. In moral hazard settings, \cite{bonatti2011collaborating} show that collaboration among agents diminishes over time and that the expected time to success increases with the number of agents. Lastly, the analysis of our paper on how connectivity impacts exploration incentives has implications for designing optimal policies to motivate innovation and exploration in networked economies \citep[see, ][]{manso2011motivating, kerr2014entrepreneurship}.
\section{Two-Player Economy}
\label{sec: 2player_Econ}
In this section, we propose a simple model that captures the essence of equilibrium forces and provides intuition for the general case of $n>2$ agents. We study perfect and imperfect connections between the two agents, followed by social surplus analysis. Propositions~1-3 follow from the general case in Sections~\ref{sec: many_players_Econ} and~\ref{sec: social_surplus}, so we omit their proofs here.
\subsection{Perfect Connections}
There are two agents $i$ and $j$, and the game consists of two periods, $t \in \{0,1\}$. Each agent faces a binary action choice in each period: a safe action ($a=0$, exploiting the status quo) with normalized payoff $0$, or a risky action ($a=1$, exploring the alternative). For the risky action, the return is a binary random variable $y \in \{-\alpha,1\}$ (with $\alpha \in (0,1)$) conditioned on the hidden state $\theta \in \{0,1\}$, with the following structure:
\begin{equation*}
    \BP\left(y=1 \mid \theta=1\right) = \beta \in (0,1), \text{ and } \BP\left(y=1 \mid \theta = 0\right) = 0\,.
\end{equation*}
Therefore, receiving a high payoff $y=1$ perfectly reveals the underlying state $\theta$, as the probability of a high payoff in the low state is zero. The realized payoff also reveals the action choice: observing a payoff of $0$ (respectively, nonzero) implies the agent took the safe (respectively, risky) action.

Let $\pi = \BP(\theta=1)$ be the initial prior of both players. The following timeline elucidates the order of events in this two-period economy:
\begin{figure}[ht]
\begin{center}
\begin{tikzpicture}[decoration=snake]
\node (start) at (0,0) {};
\node (end) at (15,0) {};
\path [thick,->,-stealth] (start) edge (end);

\node (s0) at (1,-0.25) {};
\node (t0) at (1,0.25) {};
\path [-] (s0) edge (t0);
\node [anchor=south] at (t0.north) {$t=0$};

\node (s1) at (6,-0.25) {};
\node (t1) at (6,0.25) {};
\path [-] (s1) edge (t1);
\node [anchor=south] at (t1.north) {$t=1$};

\draw [semithick, ->, decorate] (2,0) -- (2,-0.7);
\node[align=center, font=\scriptsize]  (a0) at (2,-1.3) {action\\$a_0 \in \{0,1\}$};

\draw [semithick, ->, decorate] (4.5,0) -- (4.5,-0.7);
\node[align=center, font=\scriptsize]  (y0) at (4.5,-1.3) {payoff\\realization $y_0$\\if $a_0=1$};

\draw [semithick, ->, decorate] (7,0) -- (7,-0.7);
\node[align=center, font=\scriptsize]  () at (7,-1.3) {observing\\ the payoff\\of the other player};

\draw [semithick, ->, decorate] (9.5,0) -- (9.5,-0.7);
\node[align=center, font=\scriptsize]  () at (9.5,-1.3) {updating\\prior $\pi$\\to posterior $\tilde{\pi}$};

\draw [semithick, ->, decorate] (11.5,0) -- (11.5,-0.7);
\node[align=center, font=\scriptsize]  () at (11.5,-1.3) {action\\$a_1\in \{0,1\}$};

\draw [semithick, ->, decorate] (13.5,0) -- (13.5,-0.7);
\node[align=center, font=\scriptsize]  () at (13.5,-1.3) {final payoff\\ realization $y_1$\\if $a_1=1$};
\end{tikzpicture}
\caption{Timeline of the two-period bandit}
\label{fig: timeline}
\end{center}
\end{figure}

At the beginning of the second period, agent $i$ observes the outcome of agent $j$'s first period experimentation if $j$ chose the risky arm. This communication step is the main focus throughout the paper. Agent $i$ then updates her prior about $\theta$ given $y_0(i)$ and $y_0(j)$, yielding posterior $\tilde{\pi}$. Let $\pi_{\ell,m}$ denote the posterior when agent $i$ observes $m \in \{0,1,2\}$ payoff signals, of which $\ell \in \{0,1,2\}$ are high realizations ($y=1$), where $\ell \leq m$. Then,
\begin{equation*}
    \pi_{\ell,m} = 1_{\{\ell\geq 1\}}+\frac{\pi(1-\beta)^m}{1-\pi+\pi(1-\beta)^m} \,1_{\{\ell=0\}}\,.
\end{equation*}
Before observing $y_0(i)$ and $y_0(j)$, $\tilde \pi$ is a random variable with ex post values $\pi_{\ell, m}$.

The game ends with each agent making a second action choice. Since the safe option is always available, the expected payoff \textit{after} Bayesian updating with posterior $\tilde \pi$ is
\begin{equation*}
    \big(\tilde{\pi}\beta-\alpha(1-\tilde{\pi}\beta )\big)^+ \coloneq \max\{\tilde{\pi}\beta(1+\alpha)-\alpha,0\}\,.
\end{equation*}
Thus, unlike the strategic first period choice $a_0$, the second period decision $a_1$ follows a simple threshold rule: choose the risky arm if and only if $\tilde \pi > \frac{\alpha}{\beta(1 + \alpha)}$ (ties favor the safe arm). 
\begin{definition}
Denote the second period's exploration threshold by $\tau \coloneq \frac{\alpha}{\beta(1+\alpha)}$.
\end{definition}

There will be two types of \textit{symmetric} pure-strategy equilibrium: \textit{exploration} equilibrium in which both agents choose the risky arm in the first period (i.e., $a_0(i) = a_0(j) = 1$), and \textit{exploitation} equilibrium where both agents select the safe arm in the first period (i.e., $a_0(i) = a_0(j) = 0$). The equilibrium is called \textit{asymmetric} when one agent explores and the other one exploits (thus $a_0(i) \neq a_0(j)$). Let $\delta \in [0,1]$ be the time discount factor, that is each agent values the payoffs in the first and second periods with the respective weights of $1-\delta$ and $\delta$. This means that our agents are not myopic and they incorporate future gains from current exploration in their decision problem.\footnote{This is in contrast to the social learning models of \cite{bala1998learning} and \cite{sadler2020innovation} in which players are myopic. In particular, they collect information from their neighbors just to maximize their \textit{present} period payoff.}
\begin{proposition}
\label{prop: 2_players_full_sharing}
There exist two thresholds $\underline{\pi} < \bar{\pi}$ such that the exploitation equilibrium appears on $[0,\underline{\pi}]$, and the exploration equilibrium appears on $(\bar{\pi},1]$. In the intermediate region $(\underline{\pi},\bar{\pi}]$ the equilibrium is asymmetric with only one agent exploring. Closed form expressions for the cutoffs are
\begin{equation}
\label{eq: full_sharing_thresholds}
    \underline{\pi}=\frac{\alpha(1-\delta)}{\beta\left((1+\alpha)(1-\delta)+\delta (\beta -\alpha(1-\beta)\right)}\,, \quad \bar{\pi}=\frac{\alpha(1-\delta)}{\beta\left((1+\alpha)(1-\delta)+\delta (\beta -\alpha (1-\beta))(1-\beta) \right)}\,.
\end{equation}
\end{proposition}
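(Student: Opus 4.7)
The plan is to reduce each of the three equilibrium notions to a single-agent deviation inequality, evaluate the resulting continuation payoffs in closed form by exploiting a small-prior regime, and then solve two linear indifference equations for $\underline\pi$ and $\bar\pi$.

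I would begin by introducing the period-2 continuation value $V(\tilde\pi):=(\tilde\pi-\alpha(1-\tilde\pi))^{+}$ and, for $m\in\{0,1,2\}$, the expected period-2 payoff $V_m(\pi)$ of an agent who enters period 2 after observing $m$ realizations of the risky arm starting from the prior $\pi$. Using the posterior formula $\pi_{\ell,m}$ already stated and the fact that the probability of at least one high signal out of $m$ equals $\pi[1-(1-\beta)^m]$, this expands to
\[
V_m(\pi)=\pi\bigl[1-(1-\beta)^m\bigr]\,V(1)+\bigl(1-\pi+\pi(1-\beta)^m\bigr)\,V(\pi_{0,m}).
\]
Weighting period 1 by $1-\delta$ and period 2 by $\delta$ (the convention consistent with \eqref{eq: full_sharing_thresholds}), the four first-period strategy profiles give agent $i$: against an exploiting opponent, $\delta V(\pi)$ from exploiting versus $(1-\delta)[\pi-\alpha(1-\pi)]+\delta V_1(\pi)$ from exploring; against an exploring opponent, $\delta V_1(\pi)$ from exploiting versus $(1-\delta)[\pi-\alpha(1-\pi)]+\delta V_2(\pi)$ from exploring. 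Subtracting within each pair yields the two incentive differences
\begin{align*}
I_1(\pi)&:=(1-\delta)[\pi-\alpha(1-\pi)]+\delta\bigl[V_1(\pi)-V(\pi)\bigr],\\
I_2(\pi)&:=(1-\delta)[\pi-\alpha(1-\pi)]+\delta\bigl[V_2(\pi)-V_1(\pi)\bigr],
\end{align*}
so that the exploitation equilibrium is characterized by $I_1(\pi)\le 0$, the exploration equilibrium by $I_2(\pi)\ge 0$, and the asymmetric equilibrium by the intersection of the opposite pair $I_1(\pi)\ge 0$ and $I_2(\pi)\le 0$.

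Next, to produce the closed forms I would verify that both candidate thresholds lie in the small-prior regime $\pi\le\alpha/(1+\alpha)$, inside which $V(\pi_{0,m})=0$ and hence $V_m(\pi)=\pi[1-(1-\beta)^m]$. Under this simplification $V_1(\pi)-V(\pi)=\pi\beta$ and $V_2(\pi)-V_1(\pi)=\pi\beta(1-\beta)$, so $I_1$ and $I_2$ become affine in $\pi$ with strictly positive slopes, giving unique roots that are exactly the displayed formulas in \eqref{eq: full_sharing_thresholds}. The ordering $\underline\pi<\bar\pi$ then follows from $\beta>\beta(1-\beta)$, and the three intervals claimed in the statement drop out after fixing a tie-breaking convention at the two boundary priors.

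The main obstacle is justifying the small-prior simplification a priori, since otherwise $V_m$ carries the nonlinear term $V(\pi_{0,m})$ and the indifference equations cease to be linear. This reduces to checking $\bar\pi\le\alpha/(1+\alpha)$, which in turn is equivalent to the elementary inequality $\delta\beta(1-\beta)\ge 0$. The remaining items---monotonicity of $I_1,I_2$ in $\pi$ to ensure uniqueness of the roots, and uniqueness of the asymmetric equilibrium up to relabelling of the two players---are routine.
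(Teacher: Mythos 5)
Your setup mirrors the paper's: the same deviation payoffs ($v_1$ vs.\ $w_0$ and $v_2$ vs.\ $w_1$, in your notation $I_1$ and $I_2$), the same linearization on the small-prior region, and the same affine indifference equations producing \eqref{eq: full_sharing_thresholds}. The gap is in the last step, which you dismiss as "routine monotonicity": the proposition claims exploitation appears \emph{only} on $[0,\underline{\pi}]$ and exploration on \emph{all} of $(\bar{\pi},1]$, so you must control the signs of $I_1$ and $I_2$ on the whole interval, including the region $\pi>\alpha/(1+\alpha)$ where the terms $\left[\pi(1-\beta)^m-\alpha(1-\pi)\right]^+$ activate. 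There, $I_1$ and $I_2$ are piecewise affine but \emph{not} monotone in $\pi$ in general: for instance on $\left[\xi_0,\xi_1\right]$ (with $\tfrac{\xi_0}{1-\xi_0}=\alpha$, $\tfrac{\xi_1}{1-\xi_1}=\tfrac{\alpha}{1-\beta}$) one has $I_1(\pi)=(1-\delta)\bigl(\pi-\alpha(1-\pi)\bigr)+\delta\bigl(\pi\beta-\pi+\alpha(1-\pi)\bigr)$, whose slope $(1+\alpha)(1-2\delta)+\delta\beta$ is negative for $\delta$ close to $1$ and $\beta$ small; similarly the slope of $I_2$ on $\left[\xi_1,\xi_2\right]$ can be negative. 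So monotonicity cannot be the argument, and checking $\bar{\pi}\le\alpha/(1+\alpha)$ only locates the root of the affine branch; it says nothing about whether $I_2$ stays positive (or $I_1$ fails to turn negative again) at higher priors.

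What closes this gap---and is the bulk of the paper's proof---is exploiting piecewise linearity: on each linear piece it suffices to verify the strict inequality at the kink points $\xi_1$, $\xi_2$ and at $\pi=1$. These checks are elementary but not contentless; e.g.\ the exploration condition at $\xi_2$ reduces to $1-\delta-\delta(1-\beta)^2>(1-2\delta)(1-\beta)^2$, and the exploitation-side check at $\xi_1$ reduces to $\tfrac{(1-\delta)\alpha\beta}{\alpha+1-\beta}>0$, both of which hold for all parameter values but must be verified. With those endpoint verifications added, your argument becomes essentially the paper's proof; without them, the "only on $[0,\underline{\pi}]$" and "on all of $(\bar{\pi},1]$" parts of the statement are not established.
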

This result shows that the equilibrium number of explorers is weakly increasing in the initial belief. Two important comparative statics are the effects of patience ($\delta$) and signal precision ($\beta$) on these thresholds.
\begin{figure}[htp]
\begin{center}
\begin{subfigure}{.45\textwidth}
\begin{center}
\pgfplotsset{normalsize}
\begin{tikzpicture}
\tikzmath{
    \A = 0.5;
    \Be = 0.8;
}
\begin{axis}[smooth,
    axis line style={semithick,-stealth},
    xtick={0,1},
    xticklabels={$0$,$1$},
    x label style={at={(axis description cs:1,0)},anchor=south},
    ytick={0},
    yticklabels={},
    xmax=1.1,
    ymax=0.45,
    axis lines=left,
    xlabel={$\delta$},
    x label style={below},
    ]
\addplot[forget plot,domain=0:1,samples = 200,dashed,thick]
{(\A*(1-x))/(\Be*((1+\A)*(1-x)+(x*(\Be-\A*(1-\Be)))))}[xshift=5pt] node [pos=0.5,above] {$\underline{\pi}$};

\addplot[forget plot,domain=0:1,samples = 400,thick]
{(\A*(1-x))/(\Be*((1+\A)*(1-x)+(x*(\Be-\A*(1-\Be))*(1-\Be))))}[xshift=5pt] node [pos=0.5,above] {$\bar{\pi}$};
\end{axis}    
\end{tikzpicture}
\caption{Effect of $\delta$}
\label{fig: patience}
\end{center}
\end{subfigure}\quad\quad
\begin{subfigure}{.45\textwidth}
\begin{center}
\pgfplotsset{normalsize}
\begin{tikzpicture}
\tikzmath{
    \A = 0.5;
    \D = 0.9;
}
\begin{axis}[smooth,
    axis line style={semithick,-stealth},
    xtick={0.333,1},
    xticklabels={$\frac{\alpha}{1+\alpha}$,$1$},
    x label style={at={(axis description cs:1,0)},anchor=south},
    ytick={0},
    yticklabels={},
    xmax=1.1,
    ymax=1.1,
    axis lines=left,
    xlabel={$\beta$},
    x label style={below},
    ]
\addplot[forget plot,domain=0.333:1,samples = 400,dashed,thick]
{(\A*(1-\D))/(x*((1+\A)*(1-\D)+(\D*(x -\A*(1-x)))))}[xshift=-5pt] node [pos=0.5,below] {$\underline{\pi}$};

\addplot[forget plot,domain=0.333:1,samples = 200,thick]
{(\A*(1-\D))/(x*((1+\A)*(1-\D)+(\D*(x -\A*(1-x)))*(1-x)))}[xshift=5pt] node [pos=0.5,above] {$\bar{\pi}$};
\end{axis}    
\end{tikzpicture}
\caption{Effect of $\beta$}
\label{fig: precision}
\end{center}
\end{subfigure}
\caption{Comparative statics of thresholds}
\label{fig: comparative_static_thresholds}
\end{center}
\end{figure}
As it appears from Figure~\ref{fig: patience} higher patience (namely higher $\delta$) is associated with smaller exploration thresholds, thereby increasing the incentives to sacrifice the present payoff to learn about the risky arm and gain the benefits in the next period. Specifically, higher patience enlarges the exploration equilibrium region and shrinks the exploitation region. 

Higher uncertainty about the risky arm (namely intermediate levels of $\beta$) is associated with higher gains from exploration, and hence lower exploration threshold. Figure~\ref{fig: precision} confirms this intuition. In addition, higher $\beta$ increases the exploration gain upon receiving conclusive signals about $\theta$ more so than it raises the opportunity cost of exploration absent such signals. Therefore, it lowers the individual's incentive to exploit the safe arm, thereby shrinking the exploitation region (see $\underline{\pi}$ in Figure~\ref{fig: precision}).
\subsection{Imperfect Connections}
Suppose the connection between the two players is \textit{imperfect}. That is, each agent observes the outcome of the other agent's first period experimentation, and thereby their action choice, with probability $p$. In the next proposition, we show that such imperfect communication does not impact the exploitation region and \textit{expands} the exploration region.
\begin{proposition}
\label{prop: 2_players_imperfect_comm}
In the presence of imperfect connections ($p<1$), there exist two thresholds $\underline{\pi}<\bar{\pi}$ such that the exploitation equilibrium appears on $[0,\underline{\pi}]$, and the exploration equilibrium appears on $(\bar{\pi},1]$. In the intermediate region $(\underline{\pi},\bar{\pi}]$ the equilibrium is asymmetric with only one agent exploring. Closed form expressions for the cutoffs are
\begin{equation*}
    \underline{\pi}=\frac{\alpha(1-\delta)}{\beta\left((1+\alpha)(1-\delta)+\delta (\beta -\alpha(1-\beta)\right)}\,, \quad \bar{\pi}=\frac{\alpha(1-\delta)}{\beta\left((1+\alpha)(1-\delta)+\delta (\beta -\alpha (1-\beta)(1-p\beta)) \right)}\,.
\end{equation*}
\end{proposition}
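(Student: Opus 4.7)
The plan is to mirror the proof of Proposition \ref{prop: 2_players_full_sharing} and track where the connection probability $p$ enters. A single observation handles $\underline\pi$: in a candidate exploitation equilibrium, neither agent produces a signal in period $0$, so the communication channel carries nothing and a deviator's period-$1$ information set is simply her own signal -- exactly as in the full-sharing case. The indifference condition is therefore unchanged and gives the same $\underline\pi$.

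For $\bar\pi$, I would compute $V_E - V_N$ for a single deviator when the other is exploring, conditioning on the $p$-coin that determines whether the deviator observes her opponent's signal. Under ``observed'' (probability $p$) her information is $(Y_i,Y_j)$ if she explores and $Y_j$ if she exploits; under ``not observed'' (probability $1-p$) it is $Y_i$ versus $\emptyset$. Two simplifications collapse the continuation values. First, any realization $y=1$ is conclusive, so the posterior jumps to $1$ and the value function $(\tilde\pi-\alpha(1-\tilde\pi))^+$ becomes $1$. Second, since $\bar\pi<\alpha/(1+\alpha)$, every posterior obtainable from only low signals -- the prior $\pi$, the single-low $\pi(1-\beta)/(1-\pi\beta)$, and the double-low $\pi(1-\beta)^2/[\pi(1-\beta)^2+(1-\pi)]$ -- lies below breakeven and yields continuation value $0$. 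Thus only the event ``at least one high signal'' matters.

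Using conditional independence of $Y_i, Y_j$ given $\theta$, the observed branch's informational value is $\pi\beta(2-\beta)$ under exploration and $\pi\beta$ under exploitation (gap $\pi\beta(1-\beta)$), while the not-observed branch gives $\pi\beta$ and $0$. Weighting and summing, the period-$1$ gain from exploring is $\delta\pi\beta[p(1-\beta)+(1-p)] = \delta\pi\beta(1-p\beta)$. Balancing this against the period-$0$ cost $(1-\delta)[\pi-\alpha(1-\pi)]$ and solving for the root recovers the stated $\bar\pi$. The three equilibrium regions then follow verbatim from the full-sharing argument: exploitation holds iff a single agent cannot profitably explore alone ($\pi\leq\underline\pi$); exploration holds iff a second explorer remains profitable given the first ($\pi>\bar\pi$); and the asymmetric equilibrium fills $(\underline\pi,\bar\pi]$.

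The main technical obstacle is the uniform verification of the ``low-only posteriors give zero continuation value'' step on the candidate region $(\underline\pi,\bar\pi]$. Each candidate posterior is monotone increasing in $\pi$ and bounded above by $\pi$, and one checks $\bar\pi<\alpha/(1+\alpha)$ directly by comparing denominators in the closed-form expression, so no circularity arises. The rest is algebraic unwinding parallel to Proposition \ref{prop: 2_players_full_sharing}.
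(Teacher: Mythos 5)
Your treatment of $\underline{\pi}$ and your derivation of $\bar{\pi}$ coincide with the paper's argument: the exploitation threshold is unaffected by $p$ because when the opponent exploits there is nothing to transmit, and conditioning on the $p$-coin gives the conclusive-signal gain $\delta \pi \beta\left(p(1-\beta)+(1-p)\right)=\delta\pi\beta(1-p\beta)$, whose balance against the first-period cost $(1-\delta)\left(\alpha(1-\pi)-\pi\right)$ produces exactly the stated $\bar{\pi}$. Your observation that $\bar{\pi}<\alpha/(1+\alpha)$ follows by comparing denominators, and that this removes any circularity, is also correct and is implicitly what the paper uses.

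The gap is the claim that the remaining steps ``follow verbatim from the full-sharing argument.'' Your simplification that only the event of at least one high signal matters is valid only when every posterior reachable without a high signal (including the prior itself, which is the exploiting deviator's posterior in the unobserved branch) lies below the breakeven $\alpha/(1+\alpha)$, i.e.\ only for $\pi\leq \xi_0$ with $\xi_0/(1-\xi_0)=\alpha$. This establishes that the exploration equilibrium fails on $[0,\bar{\pi}]$ and holds on $(\bar{\pi},\xi_0]$, and it suffices for the asymmetric region since $(\underline{\pi},\bar{\pi}]\subset[0,\xi_0]$; but the proposition asserts existence of the exploration equilibrium on all of $(\bar{\pi},1]$, so you must also verify $v_2(\pi)>w_1(\pi)$ for $\pi>\xi_0$, where the free-riding opportunity cost $\delta(1-p)\left\{\left[\pi-\alpha(1-\pi)\right]^+-\left[\pi(1-\beta)-\alpha(1-\pi)\right]^+\right\}+\delta p\left\{\left[\pi(1-\beta)-\alpha(1-\pi)\right]^+-\left[\pi(1-\beta)^2-\alpha(1-\pi)\right]^+\right\}$ is active. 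This piecewise-linear cost has kinks at $\xi_0,\xi_1,\xi_2$ and carries $p$-weights, so it is not the same object as in Proposition \ref{prop: 2_players_full_sharing} (whose cost had kinks only at $\xi_1,\xi_2$), and the corner checks there do not transfer verbatim; the paper performs three separate region-by-region verifications (on $[\xi_0,\xi_1]$, $[\xi_1,\xi_2]$ and $[\xi_2,1]$), in each case arguing the inequality holds regardless of the sign of the coefficient of $\pi$. Those checks do go through, so your route succeeds once you supply them, but as written the high-belief half of the exploration region is asserted rather than proved.
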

The key result is that $\d \bar{\pi}/\d p>0$: weaker ties correspond to higher exploration in this two-player economy. Stronger connections increase free-riding incentives, reducing first period exploration and thus raising the belief threshold required for exploring the risky arm.
\subsection{Social Surplus}
So far we have analyzed the equilibrium response in the two-player bandit game with imperfect connections. We now investigate when a ``benevolent'' planner prescribes exploitation or exploration by both agents.
\begin{proposition}
\label{prop: 2_player_planner}
The socially optimal outcome is for both players to exploit the safe arm whenever $\pi \leq \underline{\pi}^*$, and to jointly explore the risky arm on $\pi \geq \bar{\pi}^*$, where
\begin{equation*}
\begin{gathered}
\underline{\pi}^*=\frac{\alpha(1-\delta)}{\beta\left[(1+\alpha)(1-\delta)+\delta (\beta-\alpha(1-\beta)) (1+p)\right]}\,, \\
\bar{\pi}^* = \frac{\alpha(1-\delta)}{\beta\left[(1+\alpha)(1-\delta)+\delta(\beta-\alpha(1-\beta))\left(1+p(1-2\beta)\right)\right]}\,.
\end{gathered}
\end{equation*}
\end{proposition}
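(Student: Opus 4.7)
My plan is to compute the planner's total expected welfare $V(\pi)$ under each of the three symmetric period-$0$ action profiles --- both exploit (EE), asymmetric (XE), both explore (XX) --- and to read off $\underline{\pi}^*$ and $\bar{\pi}^*$ from the two adjacent indifferences $V_{EE}=V_{XE}$ and $V_{XE}=V_{XX}$, in parallel with the proof of Proposition~\ref{prop: 2_players_full_sharing}. By symmetry between the two agents the labelling in XE is immaterial, and period-$1$ continuation is chosen optimally given whatever information has flowed.

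The central simplification is that, in the low-belief region where $\pi<\alpha/(1+\alpha)$, the posteriors $\pi_{0,1},\pi_{0,2}$ are also strictly below $\alpha/(1+\alpha)$, so any informed agent's optimal period-$1$ continuation gives payoff $1$ if she has observed some $y=1$ and $0$ otherwise. Hence $V_{EE}(\pi)=0$. For XE, with $i$ exploring and $j$ exploiting, $i$'s period-$1$ contribution is $\pi\beta$ and $j$'s is $p\,\pi\beta$ (she must observe $i$ \emph{and} see a success), yielding
\begin{equation*}
V_{XE}(\pi)=(1-\delta)\bigl[\pi(1+\alpha)-\alpha\bigr]+\delta\pi\beta(1+p).
\end{equation*}
For XX each explorer receives $1$ in period~$1$ with probability $\pi\beta+p\,\pi\beta(1-\beta)=\pi\beta[1+p(1-\beta)]$ --- own success, or own failure together with an observed other's success, by conditional independence of $y_0(i),y_0(j)$ given $\theta$ --- giving
\begin{equation*}
V_{XX}(\pi)=2(1-\delta)\bigl[\pi(1+\alpha)-\alpha\bigr]+2\delta\pi\beta\bigl[1+p(1-\beta)\bigr].
\end{equation*}

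Solving $V_{EE}=V_{XE}$ yields $\underline{\pi}^*$ directly. For the second threshold, $V_{XX}-V_{XE}$ simplifies via the identity $2[1+p(1-\beta)]-(1+p)=1+p(1-2\beta)$ to
\begin{equation*}
(1-\delta)\bigl[\pi(1+\alpha)-\alpha\bigr]+\delta\pi\beta\bigl[1+p(1-2\beta)\bigr]=0,
\end{equation*}
whose root is the stated $\bar{\pi}^*$. Since $\beta>0$ gives $1+p>1+p(1-2\beta)$, we have $\underline{\pi}^*<\bar{\pi}^*$; and because each $V$ is affine in $\pi$ with strictly ordered slopes $0=V_{EE}'<V_{XE}'<V_{XX}'$, direct comparison of the three lines partitions $[0,1]$ into the three regions with EE, XE, XX optimal respectively.

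The only non-routine step is verifying the low-belief continuation simplification is actually active at both thresholds. Concretely I would show $\underline{\pi}^*,\bar{\pi}^*<\alpha/(1+\alpha)$: their denominators exceed $(1-\delta)(1+\alpha)$ by the strictly positive quantities $\delta\beta(1+p)$ and $\delta\beta(1+p(1-2\beta))$ respectively, the latter requiring only $p(2\beta-1)<1$, which is automatic when $p\le 1$ and $\beta<1$. The complementary case $\pi>\alpha/(1+\alpha)$ lies well above $\bar{\pi}^*$, and there period-$1$ exploration is valuable even without information; but since XX has the steepest slope the ordering persists, so the characterization extends to the entire interval.
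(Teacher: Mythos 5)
Your route is the same as the paper's: compute the planner's surplus for zero, one, and two explorers ($u_0,u_1,u_2$ in the paper's notation) and read the thresholds off the pairwise comparisons $u_0$ vs.\ $u_1$ and $u_1$ vs.\ $u_2$. Your low-belief expressions $V_{EE},V_{XE},V_{XX}$ agree with the paper's $u_0,u_1,u_2$ once the $[\cdot]^+$ terms are dropped, the algebra giving $\underline{\pi}^*$ and $\bar{\pi}^*$ is correct, and so is the check that both thresholds lie strictly below $\alpha/(1+\alpha)$.

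The one step that does not hold as written is your extension to $\pi>\alpha/(1+\alpha)$. There the three surpluses are only piecewise linear, and the slope ordering you invoke genuinely fails: at the kink $\pi=\alpha/(1+\alpha)$ the exploitation profiles pick up extra slope from the newly positive continuation terms ($u_0$ gains $2\delta(1+\alpha)$, $u_1$ gains $\delta(1-p)(1+\alpha)$) while $u_2$ gains nothing until the later kinks, so for $\delta$ large enough $u_0$ and $u_1$ are strictly steeper than $u_2$ above that point and ``XX has the steepest slope'' is false. The conclusion you want is still true, but it needs the piecewise verification rather than the slope argument: the differences $u_2-u_1$ and $u_2-u_0$ are linear between the kinks $\xi_r$ defined by $\pi/(1-\pi)=\alpha/(1-\beta)^r$, $r=0,1,2$, so it suffices to check nonnegativity at those kinks and at $\pi=1$. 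For instance $u_2-u_1$ equals $\delta\pi\beta\bigl(1+p(1-2\beta)\bigr)$ at $\xi_0$, $\pi\beta\bigl(1-\delta+2\delta p(1-\beta)\bigr)$ at $\xi_1$, $\pi\beta(2-\beta)(1-\delta)$ at $\xi_2$, and $1-\delta$ at $\pi=1$, all positive; an analogous check works for $u_2-u_0$. This is essentially the ``straightforward analysis'' the paper compresses, and it is why the paper keeps the $[\cdot]^+$ terms in its expressions for $u_0,u_1,u_2$ so that the comparison covers all of $[0,1]$.
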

The substantial lesson from this proposition is that the equilibrium outcome features over-exploitation ($\underline{\pi}>\underline{\pi}^*$) and under-exploration ($\bar{\pi}>\bar{\pi}^*$) relative to the social optimum.
\begin{figure}[h]
\centering
\begin{tikzpicture}[font=\small]
  \def\axisLen{10}
  \def\piUSX{1.5}    
  \def\piUX{2.7}     
  \def\piBSX{4.8}    
  \def\piBX{6.8}     
  \def\tauX{8.5}     

  \def\topBot{0.25}   
  \def\topTop{0.80}   
  \def\botTop{-0.25}  
  \def\botBot{-0.80}  

  \fill[white,draw=black,thin]   (0,\topBot)    rectangle (\piUSX,\topTop);
  \fill[gray!40,draw=black,thin] (\piUSX,\topBot) rectangle (\piBSX,\topTop);
  \fill[gray!80,draw=black,thin] (\piBSX,\topBot) rectangle (\axisLen,\topTop);

  \fill[white,draw=black,thin]   (0,\botBot)    rectangle (\piUX,\botTop);
  \fill[gray!40,draw=black,thin] (\piUX,\botBot) rectangle (\piBX,\botTop);
  \fill[gray!80,draw=black,thin] (\piBX,\botBot) rectangle (\axisLen,\botTop);

  \node[font=\scriptsize] at ({\piUSX/2}, 0.525)                   {Exploit};
  \node[font=\scriptsize] at ({(\piUSX+\piBSX)/2}, 0.525)         {Asymmetric};
  \node[font=\scriptsize,white] at ({(\piBSX+\axisLen)/2}, 0.525) {Explore};
  \node[font=\scriptsize] at ({\piUX/2}, -0.525)                   {Exploit};
  \node[font=\scriptsize] at ({(\piUX+\piBX)/2}, -0.525)          {Asymmetric};
  \node[font=\scriptsize,white] at ({(\piBX+\axisLen)/2}, -0.525) {Explore};

  \node[above=2pt, font=\small] at (\piUSX, \topTop) {$\underline{\pi}^*$};
  \node[above=2pt, font=\small] at (\piBSX, \topTop) {$\bar{\pi}^*$};

  \node[left, align=right, font=\scriptsize] at (-0.2,  0.525) {Social\\[-2pt]optimum};
  \node[left, align=right, font=\scriptsize] at (-0.2, -0.525) {Equilibrium};

  \draw[thick,->] (0,0) -- (\axisLen+0.5, 0) node[right] {$\pi$};
  \draw[thick] (0,       0.06) -- (0,       -0.06);
  \draw[thick] (\axisLen,0.06) -- (\axisLen,-0.06);

  \draw[thick] (\piUSX, 0.06) -- (\piUSX, -0.06);
  \draw[thick] (\piUX,  0.06) -- (\piUX,  -0.06);
  \draw[thick] (\piBSX, 0.06) -- (\piBSX, -0.06);
  \draw[thick] (\piBX,  0.06) -- (\piBX,  -0.06);
  \draw[thick] (\tauX,  0.06) -- (\tauX,  -0.06);

  \node[below=2pt, font=\small] at (0,        -0.80) {$0$};
  \node[below=2pt, font=\small] at (\piUX,   -0.80) {$\underline{\pi}$};
  \node[below=2pt, font=\small] at (\piBX,   -0.80) {$\bar{\pi}$};
  \node[below=2pt, font=\small] at (\tauX,   -0.80) {$\tau$};
  \node[below=2pt, font=\small] at (\axisLen,-0.80) {$1$};

\end{tikzpicture}
\caption{Equilibrium and optimum thresholds on the unit interval; $\underline{\pi}$ is the equilibrium exploitation cutoff; $\bar \pi$ is the equilibrium exploration cutoff; $\underline{\pi}^*$ is the optimal exploitation cutoff; $\bar{\pi}^*$ is the optimal exploration cutoff; $\tau$ is the second-period exploration cutoff; In equilibrium (bottom), over-exploitation ($\underline{\pi} > \underline{\pi}^*$) and under-exploration ($\bar{\pi} > \bar{\pi}^*$) shift the thresholds rightward relative to the social optimum.}
\label{fig: beliefs_cutoffs}
\end{figure}

We wrap up this section by investigating the effect of the connection probability $p$ on equilibrium social surplus. Social surplus is \textit{increasing} in $p$ within each equilibrium region, but drops discontinuously at the regime change from full exploration to asymmetric equilibrium, as illustrated in Figure~\ref{fig: SS} for two levels of initial beliefs $\pi_2 > \pi_1$. Let $p(\pi)$ be the level at which $\bar{\pi}(p)=\pi$. When $p< p(\pi)$, full exploration prevails and social surplus increases with $p$. Once $p$ surpasses $p(\pi)$, the number of explorers drops from two to one, causing the discontinuous fall. Thereafter, raising $p$ increases social surplus by enhancing information sharing without altering free-riding incentives.

\begin{figure}[htp]
    \centering
    \begin{subfigure}{.48\textwidth}
        \centering
        \begin{tikzpicture}
            \tikzmath{
                \Del = 0.9; \Al = 0.5; \Be = 0.7; \P = 0.6;
                \PLowerBarStar = 0.07582;
                \PLowerBar = 0.11074;
                \PUpperBarStar = 0.13574;
                \PUpperBar = 0.16341;
            }
            \begin{axis}[
                smooth,
                axis lines=left,
                axis line style={-stealth},
                xmin=0, xmax=0.22,
                ymin=0, ymax=0.1,
                xtick={\PLowerBarStar, \PLowerBar, \PUpperBarStar, \PUpperBar},
                xticklabels={$\underline{\pi}^*$, $\underline{\pi}$, $\bar{\pi}^*$, $\bar{\pi}$},
                ytick={0},
                yticklabels={0},
                xlabel={$\pi$},
                x label style={at={(axis description cs:1,0)}, anchor=north},
                legend style={at={(0.03,0.97)}, anchor=north west, nodes={scale=0.7, transform shape}, cells={anchor=west}}
            ]
                \addplot[thick, dashed, domain=0.06:\PLowerBarStar, forget plot] {0};
                \addplot[thick, dashed, samples=400, domain=\PLowerBarStar:\PUpperBarStar] 
                    {((1-\Del)*((x*\Be)-(\Al*(1-(x*\Be)))))+(\Del*x*\Be*(\Be-(\Al*(1-\Be)))*(1+\P))};
                \addplot[thick, dashed, samples=400, domain=\PUpperBarStar:0.2] 
                    {(2*(1-\Del)*((x*\Be)-(\Al*(1-(x*\Be))))) + (2*\Del*x*(\Be-(\Al*(1-\Be)))*\Be*(1+\P*(1-\Be)))};
                
                \addplot[ultra thick, domain=0:\PLowerBar, forget plot] {0};
                \addplot[thick, samples=400, domain=\PLowerBar:\PUpperBar] 
                    {((1-\Del)*((x*\Be)-(\Al*(1-(x*\Be)))))+(\Del*x*\Be*(\Be-(\Al*(1-\Be)))*(1+\P))};
                \addplot[thick, samples=400, domain=\PUpperBar:0.2] 
                    {(2*(1-\Del)*((x*\Be)-(\Al*(1-(x*\Be))))) + (2*\Del*x*(\Be-(\Al*(1-\Be)))*\Be*(1+\P*(1-\Be)))};
                
                \legend{optimal social surplus, equilibrium social surplus};
                
                \addplot[only marks, mark=*] coordinates {(\PLowerBar,0) (\PUpperBar,0.05775)};
                \addplot[fill=white, only marks, mark=*] coordinates {(\PLowerBar,0.023023) (\PUpperBar,0.0679478)};
            \end{axis}
        \end{tikzpicture}
        \caption{Optimal vs. equilibrium social surplus}
    \end{subfigure}
    \hfill
    \begin{subfigure}{.48\textwidth}
        \centering
        \begin{tikzpicture}
            \tikzmath{
                \Del = 0.4; \Al = 0.65; \Be = 0.85;
                \pcuttwo = 0.776091; \pcutone = 0.5625998;
                \pione = 0.4; \pitwo = 0.42;
                \Uotwo = 0.1542834; \Uoone = 0.106516;
                \Uttwo = 0.1667929; \Utone = 0.115152;
            }
            \begin{axis}[
                smooth,
                axis lines=left,
                axis line style={-stealth},
                xmin=0, xmax=1.1,
                ymin=0.02,
                xtick={0, \pcutone, \pcuttwo, 1},
                xticklabels={0, $p(\pi_1)$, $p(\pi_2)$, 1},
                ytick=\empty,
                xlabel={$p$},
                x label style={at={(axis description cs:1,0)}, anchor=north},
                legend style={at={(0.95,0.05)}, anchor=south east, nodes={scale=0.9, transform shape}}
            ]
                \addplot[thick, samples=400, domain=\pcuttwo:1] 
                    {((1-\Del)*((\pitwo*\Be)-(\Al*(1-(\pitwo*\Be)))))+(\Del*\pitwo*\Be*(\Be-(\Al*(1-\Be)))*(1+x))};
                \addplot[forget plot, thick, samples=400, domain=0:\pcuttwo] 
                    {(2*(1-\Del)*((\pitwo*\Be)-(\Al*(1-(\pitwo*\Be))))) + (2*\Del*\pitwo*(\Be-(\Al*(1-\Be)))*\Be*(1+x*(1-\Be)))};
                
                \addplot[thick, dashed, samples=400, domain=\pcutone:1] 
                    {((1-\Del)*((\pione*\Be)-(\Al*(1-(\pione*\Be)))))+(\Del*\pione*\Be*(\Be-(\Al*(1-\Be)))*(1+x))};
                \addplot[forget plot, thick, dashed, samples=400, domain=0:\pcutone] 
                    {(2*(1-\Del)*((\pione*\Be)-(\Al*(1-(\pione*\Be))))) + (2*\Del*\pione*(\Be-(\Al*(1-\Be)))*\Be*(1+x*(1-\Be)))};
                
                \draw [dotted] (axis cs:\pcutone, 0.02) -- (axis cs:\pcutone, \Utone);
                \draw [dotted] (axis cs:\pcuttwo, 0.02) -- (axis cs:\pcuttwo, \Uttwo);
                
                \addplot[only marks, mark=*, forget plot] coordinates {(\pcutone,\Uoone) (\pcuttwo,\Uotwo)};
                \addplot[only marks, mark=*, fill=white, forget plot] coordinates {(\pcutone,\Utone) (\pcuttwo,\Uttwo)};
                
                \legend{$\pi = \pi_2$, $\pi = \pi_1$};
            \end{axis}
        \end{tikzpicture}
        \caption{Equilibrium social surplus with respect to the connection probability $p$ for levels of initial belief $\pi_1 < \pi_2$}
    \end{subfigure}

    \vspace{1em}
    \caption{Social surplus}
    \label{fig: SS}
\end{figure}

In Section~\ref{subs: equil_soc_surplus}, we show that this pattern---increasing in connection probability within each regime, with discontinuous drops at regime changes---is robust in economies with many players.
\section{Equilibria with Multiple Agents}
\label{sec: many_players_Econ}
We study an economy with $n$ ex ante identical agents who share a common initial belief $\pi$. In the second period, each agent observes the exploration outcomes of a randomly selected group of size $M$. The realization of this group is unknown in the first period; agents know only the distribution of $M$, which is common across all players.

Aside from the above, we impose no further structure on the connection graph. Our results in Sections~\ref{subs: symm_pure_strat} and~\ref{subs: asymm_pure_strat} depend only on the probability-generating function of $M$. We characterize pure-strategy equilibria as a function of $\pi$: for high (respectively, low) initial beliefs, the pure-strategy equilibrium is symmetric with full exploration (respectively, exploitation); at intermediate beliefs, the pure-strategy equilibrium is asymmetric, featuring both explorers and exploiters. On this intermediate region, we also identify the unique symmetric mixed-strategy equilibrium and study its comparative statics.

In Section~\ref{subs: mixed_strat}, we apply these results to Erd\H{o}s-R\'enyi random graphs and analyze the limiting equilibrium properties as $n \to \infty$.
\subsection{Symmetric Pure-Strategy Equilibria}
\label{subs: symm_pure_strat}
We study two symmetric pure-strategy equilibria: exploitation and exploration. The exploitation equilibrium occurs when \textit{all} players choose the safe arm in the first period, and prevails whenever the initial belief falls below the threshold $\underline{\pi}$ in equation~\eqref{eq: full_sharing_thresholds}. One can verify this by comparing an individual's exploitation payoff when everyone else exploits ($w_0(\pi)$) with her exploration payoff as the sole explorer ($v_1(\pi)$). This implies $w_0(\pi) \geq v_1(\pi)$ whenever $\pi \leq \underline{\pi}$. Thus, the exploitation equilibrium condition remains unchanged despite having more than two players and imperfect connections.

The more intriguing case is the examination of the existence of the exploration equilibrium in which \textit{all} agents choose the risky arm in the first period. 

Define $v_k(\pi)$ and $w_k(\pi)$ as an agent’s expected payoff from exploration and exploitation in the first period when, in total, $k$ agents explore. Thus the exploration equilibrium appears if and only if $v_n(\pi) > w_{n-1}(\pi)$. We reduce this inequality and prove the next theorem in the appendix.
\begin{theorem}[Exploration equilibrium]
\label{thm: exploration_equil}
Let $M$ be the size of the random group of contacts in the second period. Then, the exploration equilibrium appears on $\pi > \bar{\pi}$, where
\begin{equation}
\label{eq: n_player_pi_bar}
    \bar{\pi} = \frac{\alpha (1-\delta)}{\beta\left((1+\alpha)(1-\delta)+\delta \left(\beta-\alpha(1-\beta)\right) \BE\left[(1-\beta)^{M}\right]\right)}\,.
\end{equation}
\end{theorem}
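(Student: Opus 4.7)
The plan is to reduce the exploration condition \eqref{eq: n_player_exploration_cond} to an explicit inequality in $\pi$ whose threshold equals $\bar{\pi}$. The driving probabilistic fact is that, conditional on $\theta=0$, both $L=0$ and $y_0=-\alpha$ hold almost surely, and $y_0$ is independent of $(L,M)$ given $\theta$; hence every event implying $\theta=1$ (such as $\{y_0=1\}$ or $\{L>0\}$) contributes no $-\alpha(1-\theta)$ mass to the joint expectation. Using this I would evaluate the three conclusive-signal terms on the LHS as
\[
\BE[\theta-\alpha(1-\theta); y_0=1]=\pi\beta, \qquad \BE[\theta-\alpha(1-\theta); L>0]=\pi\bigl(1-\BE[(1-\beta)^M]\bigr),
\]
and $\BE[\theta-\alpha(1-\theta); L>0, y_0=-\alpha] = \pi(1-\beta)\bigl(1-\BE[(1-\beta)^M]\bigr)$. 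Substituting into \eqref{eq: n_player_exploration_cond}, the braced LHS telescopes to $\delta\pi\beta\BE[(1-\beta)^M]$, while the RHS becomes $\delta\sum_{m} q(m)\bigl\{[\pi(1-\beta)^m-\alpha(1-\pi)]^+-[\pi(1-\beta)^{m+1}-\alpha(1-\pi)]^+\bigr\}$.

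The next step is a denominator comparison showing $\bar{\pi} \leq \alpha/(1+\alpha)$, which splits the argument in two regimes. For $\pi \in (\bar{\pi}, \alpha/(1+\alpha)]$, every $m$ satisfies $\pi(1-\beta)^m \leq \pi \leq \alpha(1-\pi)$, so each positive part on the RHS vanishes and \eqref{eq: n_player_exploration_cond} reduces to $(1-\delta)(\pi-\alpha(1-\pi)) + \delta\pi\beta\BE[(1-\beta)^M] > 0$, which rearranges exactly to $\pi > \bar{\pi}$. The same collapsed inequality delivers the converse on $[0,\bar{\pi}]$, so the cutoff is sharp inside this regime.

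The main obstacle is the complementary range $\pi > \alpha/(1+\alpha)$, where the truncations activate and no closed form is available. Here I would invoke the elementary inequality $[a]^+-[b]^+\leq a-b$ for $a\geq b$ term-wise, which bounds the RHS by $\delta\pi\beta\BE[(1-\beta)^M]$. Crucially this is the same quantity that appears as the informational benefit on the LHS, so the two cancel and one is left with $v_n(\pi)-w_{n-1}(\pi) \geq (1-\delta)(\pi-\alpha(1-\pi))>0$. Thus the exploration equilibrium persists throughout $(\alpha/(1+\alpha), 1]$ as well, completing the characterization of the threshold $\bar{\pi}$.
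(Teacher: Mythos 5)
Your proof is correct, and after the reduction of \eqref{eq: n_player_exploration_cond} (your evaluation of the conclusive-signal terms and the telescoping of the braced LHS to $\delta\pi\beta\,\BE[(1-\beta)^M]$ agrees with the computations preceding the theorem), it handles the region $\pi>\alpha/(1+\alpha)$ by a genuinely different and cleaner device than the paper. The paper's proof defines $\bar m$, the largest $m$ with $(1-\beta)^m\geq \alpha(1-\pi)/\pi$, splits the upper region into the intervals $\alpha/(1-\beta)^{\bar m}\leq \pi/(1-\pi)\leq \alpha/(1-\beta)^{\bar m+1}$, and on each interval verifies a linear inequality by a sign analysis of the coefficient of $\pi$ and a comparison of the implied cutoff with the interval endpoints. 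You instead apply the pointwise bound $[a]^+-[b]^+\leq a-b$ (valid since $a=\pi(1-\beta)^m-\alpha(1-\pi)\geq b=\pi(1-\beta)^{m+1}-\alpha(1-\pi)$) term by term, so the opportunity cost is bounded by exactly $\delta\pi\beta\,\BE[(1-\beta)^M]$, the same quantity as the informational gain; the two cancel and
\begin{equation*}
    v_n(\pi)-w_{n-1}(\pi)\;\geq\;(1-\delta)\big(\pi-\alpha(1-\pi)\big)\;>\;0
\end{equation*}
uniformly on $\pi>\alpha/(1+\alpha)$, with no interval-by-interval case analysis and for an arbitrary distribution of $M$. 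Your treatment of $\pi\leq\alpha/(1+\alpha)$ — all positive parts vanish, the condition collapses to the linear inequality whose cutoff is $\bar\pi$, sharp in both directions, together with the check $\bar\pi\leq\alpha/(1+\alpha)$ — coincides with the paper's. The only caveat, shared with the paper's own argument, is that strictness in the upper region uses $\delta<1$; at $\delta=1$ your bound (and the paper's inequalities) become tight and only $v_n\geq w_{n-1}$ obtains.
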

Our key comparative static studies how connection sparsity shifts the exploration threshold. Below, we consider (\rn{1}) first-order stochastic dominance (\textsf{FOSD}) shifts and (\rn{2}) mean-preserving spreads (\textsf{SOSD}) of the distribution of $M$.

\paragraph{FOSD shift.} Since $x\mapsto (1-\beta)^x$ is decreasing, a \textsf{FOSD} increase in the distribution of $M$ lowers $\BE\left[(1-\beta)^M\right]$. Hence the exploration threshold rises (the exploration region shrinks): this means denser connections strengthen free-riding.

\paragraph{SOSD shift.} Fix $\beta\in(0,1)$ and let $M'' \succeq_2 M'$ (i.e., $M''$ second-order stochastically dominates $M'$). By definition of \textsf{SOSD}, for every nondecreasing and concave function $f$, one has
\[
\BE\left[f(M'')\right] \geq \BE\left[f(M')\right].
\]
Take $f(x) \coloneq -(1-\beta)^x$. Since $1-\beta \in (0,1)$,
\[
f'(x) = -\ln(1-\beta)\,(1-\beta)^x > 0
\quad\text{and}\quad
f''(x) = -\big(\ln(1-\beta)\big)^2 (1-\beta)^x < 0,
\]
so $f$ is nondecreasing and concave, implying that,
\[
\BE\!\left[(1-\beta)^{M''}\right] \le \BE\!\left[(1-\beta)^{M'}\right].
\]
Therefore, if $\widetilde{M}$ is a mean-preserving spread of $M$ (so that $M \succeq_2 \widetilde{M}$), then $\bar{\pi}(\widetilde{M}) \le \bar{\pi}(M)$. In other words, greater second-order risk (a mean-preserving spread) lowers the exploration threshold, making full first-period exploration more attractive.

\subsection{Asymmetric Pure-Strategy Equilibria (Intermediate Region)}
\label{subs: asymm_pure_strat}
Previously, we characterized symmetric equilibria with full exploration or full exploitation. Here we study the intermediate region $\pi\in(\underline{\pi},\bar\pi]$, where a pure-strategy equilibrium can feature both explorers and exploiters. Let $k\in\{1,\dots,n-1\}$ be the equilibrium number of explorers. Such an equilibrium requires that (\rn{1}) explorers prefer to keep exploring rather than deviate to exploitation, i.e., $v_k(\pi)>w_{k-1}(\pi)$, and (\rn{2}) exploiters prefer to keep exploiting rather than deviate to exploration, i.e., $w_k(\pi)\ge v_{k+1}(\pi)$. 
As a first step toward analyzing such equilibria, we show that for large values of $\pi$, the second incentive constraint (i.e., $w_k(\pi)\ge v_{k+1}(\pi)$) fails.
\begin{lemma}
\label{lem: no_exploiting_large_pi}
If $\pi>\tau$, then $w_k(\pi)<v_{k+1}(\pi)$ for all $k$.
\end{lemma}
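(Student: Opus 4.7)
The plan is to directly compute the difference $v_{k+1}(\pi) - w_k(\pi)$ from the formulas in \eqref{eq: general_payoff_functions} and show it is strictly positive whenever $\pi > \alpha/(1+\alpha)$. Note that both $w_k$ and $v_{k+1}$ involve expectations under $\BE_k$, since in both scenarios the agent in question observes exactly $k$ exploring peers. First I would consolidate the two terms $\delta \pi(1-\beta)\BE_k[1-(1-\beta)^M]$ inside $v_{k+1}$ and $\delta \pi \BE_k[1-(1-\beta)^M]$ inside $w_k$ into $-\delta \pi \beta \BE_k[1-(1-\beta)^M]$, which combined with the isolated $\delta \pi \beta$ term in $v_{k+1}$ simplifies to $\delta \pi \beta \BE_k[(1-\beta)^M]$. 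After this bookkeeping, the difference reads
\begin{equation*}
v_{k+1}(\pi) - w_k(\pi) = (1-\delta)\bigl(\pi-\alpha(1-\pi)\bigr) + \delta \pi \beta \BE_k[(1-\beta)^M] + \delta \BE_k\!\left[(\pi(1-\beta)^{M+1}-\alpha(1-\pi))^+ - (\pi(1-\beta)^M-\alpha(1-\pi))^+\right].
\end{equation*}
The first term is strictly positive by the hypothesis $\pi > \alpha/(1+\alpha)$, the second term is non-negative, while the third term is non-positive since $(1-\beta)^{M+1}\leq (1-\beta)^M$.

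The crux is to lower bound that last, negative term. I would invoke the fact that $x\mapsto x^+$ is $1$-Lipschitz, so that pointwise in $m$,
\begin{equation*}
\bigl(\pi(1-\beta)^m - \alpha(1-\pi)\bigr)^+ - \bigl(\pi(1-\beta)^{m+1} - \alpha(1-\pi)\bigr)^+ \;\leq\; \pi(1-\beta)^m - \pi(1-\beta)^{m+1} = \pi \beta (1-\beta)^m.
\end{equation*}
Taking $\BE_k$ and multiplying by $-\delta$ yields the bound
\begin{equation*}
\delta \BE_k\!\left[(\pi(1-\beta)^{M+1}-\alpha(1-\pi))^+ - (\pi(1-\beta)^M-\alpha(1-\pi))^+\right] \;\geq\; -\delta \pi \beta \BE_k[(1-\beta)^M],
\end{equation*}
which exactly cancels the positive informational term in the previous display.

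Plugging this back in gives $v_{k+1}(\pi) - w_k(\pi) \geq (1-\delta)(\pi - \alpha(1-\pi))$, and the right-hand side is strictly positive under the hypothesis. The main thing to be careful about is tracking which $\BE_k$ operator accompanies which term (in particular that $v_{k+1}$ uses $\BE_k$, mirroring $w_k$, because the exploring agent sees $k$ peers rather than $k+1$); once those indices are properly aligned the argument reduces to the one-line Lipschitz cancellation above, which is where the apparent free-riding benefit is shown to be dominated by the present risky payoff $(1-\delta)(\pi-\alpha(1-\pi))$.
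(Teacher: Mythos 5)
Your proof is correct. The setup of $v_{k+1}-w_k$ — including the crucial bookkeeping point that $v_{k+1}$ carries the operator $\BE_{k}$, so both payoffs are expectations over the same distribution of $M$ — coincides exactly with the paper's first step, and your consolidation into $(1-\delta)\big(\pi-\alpha(1-\pi)\big)+\delta\pi\beta\,\BE_k\big[(1-\beta)^M\big]+\delta\,\BE_k\big[(\pi(1-\beta)^{M+1}-\alpha(1-\pi))^+-(\pi(1-\beta)^{M}-\alpha(1-\pi))^+\big]$ is the same identity the paper starts from. Where you genuinely diverge is in handling the positive-part terms: the paper invokes the hypothesis $\pi>\alpha/(1+\alpha)$ early to introduce the threshold $\bar m=\max\{0\le m\le n:(1-\beta)^m\ge\alpha(1-\pi)/\pi\}$ and then computes the truncated expectations explicitly, exhibiting the difference as a sum of nonnegative slack terms plus the first-period payoff. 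You instead bound the difference of positive parts pointwise via the monotone Lipschitz inequality $x^+-y^+\le x-y$ for $x\ge y$, which cancels the informational term $\delta\pi\beta\,\BE_k\big[(1-\beta)^M\big]$ exactly and leaves $v_{k+1}-w_k\ge(1-\delta)\big(\pi-\alpha(1-\pi)\big)$. Your route is shorter, avoids the $\bar m$ casework entirely, and in fact establishes the stronger fact that this lower bound holds for every $\pi$, with the hypothesis used only in the final strict-positivity step; the paper's explicit computation, in exchange, displays the exact residual terms (the mass of $M$ at and above the threshold), which is informative but not needed for the lemma. One shared caveat: strict positivity of $(1-\delta)\big(\pi-\alpha(1-\pi)\big)$ requires $\delta<1$, an implicit assumption in the paper's argument as well.
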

This lemma implies that a pure-strategy equilibrium with a nonzero number of exploiters cannot exist when $\pi>\tau$. Hence, to characterize intermediate equilibria, we restrict attention to $\pi\le\tau$. The next theorem provides the belief cutoffs supporting an equilibrium with $k$ explorers.
\begin{theorem}[Asymmetric pure-strategy equilibrium]
\label{thm: assymetric_equil}
An asymmetric equilibrium with $k\in\{1,\dots,n-1\}$ explorers exists if and only if $\pi\in(\pi_{k-1,n},\pi_{k,n}]$, where the cutoffs are given by
\begin{equation}
\label{eq: asymmetric_equil_cond}
\begin{gathered}
\pi_{k, n} \coloneq \frac{\alpha(1-\delta)}{\beta\left((1+\alpha)(1-\delta)+\delta (\beta-\alpha(1-\beta)) \BE_k\left[(1-\beta)^M\right]\right)}\,.
\end{gathered}
\end{equation}
\end{theorem}
The lower cutoff ($\pi>\pi_{k-1,n}$) follows from the explorers' incentive constraint $v_k>w_{k-1}$, while the upper cutoff ($\pi\le\pi_{k,n}$) follows from the exploiters' constraint $w_k\ge v_{k+1}$. Deriving these conditions parallels the steps in Theorem~\ref{thm: exploration_equil}, so we omit the proof..

As a result, the asymmetric equilibrium with $k$ explorers occurs whenever $\pi_{k-1,n}<\pi\le \pi_{k,n}$. Full exploitation prevails for $\pi\le \pi_{0,n}\equiv \underline\pi$, and full exploration for $\pi>\pi_{n-1,n}\equiv \bar\pi_n$.\footnote{We use the subscript $n$ under the exploration threshold $\bar\pi$ when we want to be explicit about the total number of players.}

\paragraph{Comparative statics of equilibrium $k$ in $\pi$.} Let $M^{(n)}_{k}$ denote the number of second-period contacts when $n$ agents are present and $k$ explore in period~1.\footnote{\label{foot: exp_notation}Depending on the context, we either use $M^{(n)}_k$ or explicitly specify the indices on the expectation operator, that is e.g., $\BE^{(n)}_k$.} A simple stochastic dominance argument shows $M^{(n)}_{k}\succeq_1 M^{(n)}_{k-1}$, hence $\pi_{k-1,n}\le \pi_{k,n}$. Therefore, the equilibrium number of explorers ($k$) is weakly increasing in $\pi$.\footnote{The term `weakly' is used because the equilibrium number of explorers is constant on each interval $(\pi_{k-1,n},\pi_{k,n}]$.}


\subsection{Symmetric Mixed-Strategy Equilibrium (Intermediate Region)}
\label{subs: mixed_strat}
In the previous two sections, we characterized pure-strategy equilibria as a function of the initial belief $\pi$. Section~\ref{subs: symm_pure_strat} shows that at the two ends, i.e., on $[0,\underline\pi]$ and $(\bar\pi,1]$, the unique equilibrium is full exploitation and full exploration, respectively. Section~\ref{subs: asymm_pure_strat} implies that on the intermediate region $(\underline\pi,\bar\pi]$, the unique pure-strategy equilibrium is asymmetric, featuring both explorers and exploiters. This leaves open the possibility of mixed-strategy equilibria on the intermediate region.

Hence in this section, we study the \emph{symmetric} mixed-strategy equilibrium in the intermediate region. Suppose that each agent explores the risky arm with probability $\mu$. This will be a mixed-strategy equilibrium if the expected payoff from exploitation, namely
\begin{equation*}
    w(\pi;\mu) \coloneq \sum_{k=0}^{n-1} \binom{n-1}{k}\mu^k(1-\mu)^{n-1-k}w_k(\pi)\,,
\end{equation*}
matches the expected payoff from exploration, that is
\begin{equation*}
    v(\pi;\mu) \coloneq \sum_{k=0}^{n-1} \binom{n-1}{k}\mu^k(1-\mu)^{n-1-k}v_{k+1}(\pi)\,.
\end{equation*}
\begin{theorem}[Symmetric mixed-strategy equilibrium]
\label{thm: symmetric_mixed_equil}
For $\pi \in \left(\underline{\pi}, \bar{\pi}\right]$ (the intermediate region), there exists a unique symmetric mixed-strategy equilibrium. Moreover, the equilibrium probability of exploration, $\mu^e$, is increasing in $\pi$ and decreasing in $n$.
\end{theorem}
The theorem has two key implications. First, it implies that the symmetric mixed-strategy equilibrium is unique. Second, it implies that the equilibrium exploration probability decreases with the number of agents, consistent with free-riding incentives in the mixed-strategy setting.

\section{Large-\texorpdfstring{$n$}{n} Limits of Equilibria in Erd\H{o}s-R\'enyi Graphs}
\label{sec: limit}
For the first time in the paper, we impose a restriction on the random graph structure: each pair of agents is connected with probability $p=\lambda/n$, where $\lambda$ is the average number of immediate neighbors.

The random group of contacts observed in period two can be (\rn{1}) immediate neighbors (the \textbf{local} case), or (\rn{2}) all agents in the connected component (the \textbf{global} case). In the global case, an agent observes signals from everyone in her component $\mathcal{C}$, with $|\mathcal{C}|=M+1$.\footnote{Each player’s connected component includes herself; hence in the \emph{latter} case, $M$ denotes the number of \emph{other} players connected to the focal agent.} In the global case, we interpret period two as a sequence of message-passing sub-periods along paths in the graph.

We first study local connections, where $M=D$ (the degree), and then global connections, where $M=|\mathcal{C}|-1$ (component size minus one). In both cases we focus on limits as $n\to\infty$ and the effect of $\lambda$ on equilibrium outcomes.

\subsection{Local Connections}
\label{subs: local_connect}
Recall that in the local regime $M=D$, the degree of a randomly drawn agent, follows a Binomial distribution $\textsf{Bin}(n-1,p)$. For constant $\lambda$, this distribution converges weakly to $\textsf{Poisson}(\lambda)$, and therefore the limiting threshold for the full exploration equilibrium in the local regime is:
\begin{equation}
\label{eq: local_exploration_threshold_limit}
\begin{aligned}
   \bar{\pi}^{\text{local}}_\infty& \coloneq \lim_{n \to \infty} \bar{\pi}^{\text{local}}_n =\lim_{n \to \infty} \frac{\alpha (1-\delta)}{\beta\left((1-\delta)(1+\alpha)+\delta \left(\beta-\alpha(1-\beta)\right) \BE\left[(1-\beta)^{D}\right]\right)}\\
   &=\frac{\alpha(1-\delta)}{\beta\left((1-\delta)(1+\alpha)+\delta \left(\beta-\alpha(1-\beta)\right) e^{-\lambda \beta}\right)}\,,
\end{aligned}
\end{equation}
where the last equality uses the probability-generating function of the Poisson distribution. Observe that as $\lambda \to \infty$, $\bar{\pi}^{\text{local}} \uparrow \tau$.
\begin{lemma}
\label{lem: monotonicity_local_explr_threshold}
In the local regime, the exploration threshold $\bar{\pi}_n$ is eventually increasing in $n$ and converges to $\bar{\pi}_\infty^{\mathrm{local}}$ defined in~\eqref{eq: local_exploration_threshold_limit}.
\end{lemma}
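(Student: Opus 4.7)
The plan reduces everything to analyzing the single quantity $f(n):=\BE\bigl[(1-\beta)^{D}\bigr]$ with $D\sim\text{Bin}(n-1,\lambda/n)$. Evaluating the Binomial probability generating function at $s=1-\beta$ gives the closed form
$$f(n)=\Bigl(1-\frac{\lambda}{n}+(1-\beta)\frac{\lambda}{n}\Bigr)^{n-1}=\Bigl(1-\frac{\beta\lambda}{n}\Bigr)^{n-1}.$$
Substituting into Theorem~\ref{thm: exploration_equil} makes $\bar{\pi}_n$ a strictly decreasing, continuous function of $f(n)$, so claims about $\bar{\pi}_n$ translate to reversed claims about $f(n)$.

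For the limit, the classical identity $(1-c/n)^n\to e^{-c}$ combined with $(1-\beta\lambda/n)\to 1$ yields $f(n)\to e^{-\beta\lambda}$, and substituting into the denominator of $\bar{\pi}_n$ returns precisely $\bar{\pi}_\infty^{\text{local}}$ in \eqref{eq: local_exploration_threshold_limit}. This step is purely routine given the closed form from the first paragraph.

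For eventual monotonicity, I would treat $n$ as a continuous variable and study $g(n):=\log f(n)=(n-1)\log(1-\beta\lambda/n)$. Using $\log(1-\beta\lambda/n)=-\beta\lambda/n-(\beta\lambda)^{2}/(2n^{2})+O(n^{-3})$ together with $(n-1)\beta\lambda/[n(n-\beta\lambda)]=\beta\lambda/n+\beta\lambda(\beta\lambda-1)/n^{2}+O(n^{-3})$, one obtains
$$g'(n)=\log\bigl(1-\beta\lambda/n\bigr)+\frac{(n-1)\beta\lambda}{n(n-\beta\lambda)}=-\frac{\beta\lambda(2-\beta\lambda)}{2n^{2}}+O(n^{-3}).$$
Hence $g$ (and therefore $f$) is eventually monotone in $n$; in the regime where the leading coefficient is negative, $f$ decreases to $e^{-\beta\lambda}$ from above, making $\bar{\pi}_n$ eventually increasing and convergent to $\bar{\pi}_\infty^{\text{local}}$. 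The main obstacle is exactly this asymptotic analysis: both Taylor expansions must be carried through the $n^{-1}$ and $n^{-2}$ orders accurately, and one has to verify that the $O(n^{-3})$ remainder is dominated by the quadratic leading term for all sufficiently large $n$. Once the sign of that leading coefficient is pinned down, eventual monotonicity and convergence fall out simultaneously from the formulation in the first paragraph.
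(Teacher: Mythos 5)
Your route is the same as the paper's: the paper also evaluates the Binomial generating function to write $\BE^{(n)}_{n-1}\left[(1-\beta)^M\right]=\left(1-\tfrac{\lambda\beta}{n}\right)^{n-1}$, and then argues that $x\mapsto (x-1)\log(1-\lambda\beta/x)$ has negative derivative for large $x$ --- exactly the function $g$ you differentiate --- while the limit statement is the same elementary computation in both cases.

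The substantive issue is that you leave the sign of the leading coefficient undetermined (``in the regime where the leading coefficient is negative\dots''), and that is precisely where the content lies. Your expansion $g'(n)=-\tfrac{\beta\lambda(2-\beta\lambda)}{2n^{2}}+O(n^{-3})$ is correct, but its leading term is negative if and only if $\beta\lambda<2$. When $\beta\lambda>2$ the sign flips: $\left(1-\tfrac{\beta\lambda}{n}\right)^{n-1}$ then increases to $e^{-\beta\lambda}$ from below (for instance with $\beta\lambda=3$ the values at $n=10,20,40,100$ are roughly $0.040,\,0.046,\,0.048,\,0.049$, rising toward $e^{-3}\approx 0.0498$), so $\bar{\pi}_n$ is eventually \emph{decreasing}, and no sharper control of the $O(n^{-3})$ remainder can rescue the unconditional claim. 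As written, your argument therefore establishes the ``eventually increasing'' part of the lemma only under $\lambda\beta<2$; this is also the implicit restriction hidden in the paper's one-line assertion that the derivative of $(x-1)\log(1-\lambda\beta/x)$ is negative for large $x$, so the gap you left open is in fact a gap in the statement itself rather than a fixable omission in your write-up. The convergence claim, by contrast, holds for all parameter values, and your derivation of it is complete.
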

This lemma implies that as the number of agents increases, free-riding incentives become stronger, albeit at a diminishing rate.

In the next lemma, we study the exploration cutoffs in the intermediate region that characterize the asymmetric equilibria (see Theorem~\ref{thm: assymetric_equil}).
\begin{lemma}
\label{lem: ordering}
In the local regime, for a fixed $k\in \BN$ and sufficiently large $n$, the following ordering holds: $\pi_{k-1,n}\leq \pi_{k,n+1}\leq \pi_{k,n} \leq \pi_{k+1,n+1}$.
\end{lemma}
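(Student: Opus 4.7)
The plan is to reduce the four-term ordering to a short chain of inequalities about the single function $g(k,n) := (1-\lambda\beta/n)^k$, and then verify it by a direct asymptotic computation.

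First I would make the closed form of $\pi_{k,n}$ explicit. In the local Erdős--Rényi regime with $p = \lambda/n$, the number of exploring peers observed by any agent when $k$ others explore is $\textsf{Bin}(k,\lambda/n)$, so
\begin{equation*}
    \BE^{(n)}_k\big[(1-\beta)^M\big] \;=\; \sum_{m=0}^{k}\binom{k}{m}(\lambda/n)^m(1-\lambda/n)^{k-m}(1-\beta)^m \;=\; \Big(1-\tfrac{\lambda\beta}{n}\Big)^k \;=:\; g(k,n).
\end{equation*}
Since $u \mapsto \alpha(1-\delta)/\big((1+\alpha)(1-\delta)+\delta\beta u\big)$ is strictly decreasing in $u$, the desired chain $\pi_{k-1,n}\leq \pi_{k,n+1}\leq \pi_{k,n}\leq \pi_{k+1,n+1}$ is equivalent to the reverse chain $g(k-1,n)\geq g(k,n+1)\geq g(k,n)\geq g(k+1,n+1)$.

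The middle link is immediate: for $n>\lambda\beta$ one has $0 < 1-\lambda\beta/n < 1-\lambda\beta/(n+1) < 1$, and raising to the $k$-th power preserves the inequality (this is essentially the computation already used in Lemma \ref{lem: monotonicity_local_explr_threshold}). The two outer links share the common shape $g(j,n)\geq g(j+1,n+1)$ for $j \in \{k-1,k\}$. Taking logs and flipping signs this becomes
\begin{equation*}
    (j+1)\big(-\log(1-\lambda\beta/(n+1))\big) \;\geq\; j\big(-\log(1-\lambda\beta/n)\big),
\end{equation*}
and applying the expansion $-\log(1-x) = x + O(x^2)$ shows that the LHS minus the RHS equals $(j+1)\lambda\beta/(n+1) - j\lambda\beta/n + O(1/n^2) = \lambda\beta(n-j)/(n(n+1)) + O(1/n^2) = \lambda\beta/n + O(1/n^2)$, which is strictly positive for $n$ large enough.

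The only technical point is the last step: the implicit constant in the $O(1/n^2)$ remainder depends on $j$ (hence on $k$), so one must verify that it is dominated by the leading $\lambda\beta/n$ term once $n$ is sufficiently large. Since $k$ is fixed, this poses no difficulty, and the same threshold $N_k$ works simultaneously for $j=k-1$ and $j=k$, as well as for the condition $n>\lambda\beta$ needed for the middle link. This is exactly the content of the ``large enough $n$'' hypothesis in the statement, and no further subtlety arises.
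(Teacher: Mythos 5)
Your proof is correct and follows essentially the same route as the paper's: both reduce the ordering to comparing $\BE^{(n)}_k\left[(1-\beta)^M\right]=(1-\lambda\beta/n)^k$ across the three links, with the middle link following from monotonicity in $n$ and the two outer links from an elementary estimate valid for large $n$ at fixed $k$. The only difference is cosmetic: you finish the outer links with a logarithmic expansion, whereas the paper uses the algebraic bounds $(1-\tfrac{\lambda\beta}{n+1})^{k+1}\le (1-\tfrac{\lambda\beta}{n+1})(1+\tfrac{k\lambda\beta}{n+1})^{-1}$ and $(1-\tfrac{\lambda\beta}{n})^{k}\ge 1-\tfrac{k\lambda\beta}{n}$.
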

This lemma demonstrates that for large economies with local connections adding one more individual \textit{never} results in fewer exploring agents in the equilibrium. Put differently, the existing agents do not switch to the  exploitation status when new individuals join the economy.\footnote{It is important to note that this conclusion primarily rests on keeping the average degree $\lambda$ constant while expanding the size of the economy.}

Let $k_n$ denote the equilibrium number of exploring agents under pure strategies. The following proposition shows that in an ER economy with local connections, the equilibrium fraction of explorers $k_n/n$ converges as $n \to \infty$. The proof uses the incentive condition~\eqref{eq: asymmetric_equil_cond} to establish matching upper and lower bounds for $k_n$.

For convenience and to avoid repetition, we use the following definition.
\begin{definition}
Define the functions $C_1(\cdot)$ and $C_2(\cdot)$ as
\begin{equation*}
    C_1(\pi)\coloneq-(1-\delta)\big(\pi\beta-\alpha(1-\pi\beta)\big)\,,\qquad C_2(\pi)\coloneq\delta\pi(\beta-\alpha(1-\beta))\beta\,,
\end{equation*}
both positive for $\pi\leq\tau$.
\end{definition}
\begin{proposition}[Limiting fraction of explorers]
\label{prop: equil_num_explr_agents}
Let $k_n(\pi)$ be the equilibrium number of exploring agents in an Erd\H{o}s-R\'enyi economy of $n$ individuals with local connections, then:
\begin{equation}
\label{eq: kappa_expression}
    \lim_{n \to \infty} \frac{k_n(\pi)}{n}=\kappa(\pi) \coloneq
    \left\{
    \begin{array}{ll}
        0 & \pi \leq \underline{\pi}\\
        \frac{1}{\lambda \beta}\log\frac{C_2(\pi)}{C_1(\pi)} &  \underline{\pi}< \pi < \bar{\pi}^{\mathrm{local}}_\infty \\
        1 & \pi \geq \bar{\pi}^{\mathrm{local}}_\infty
    \end{array} \right.
\end{equation}
\end{proposition}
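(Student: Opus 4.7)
The plan is to extract $k_n$ directly from the asymmetric equilibrium characterization in Theorem \ref{thm: assymetric_equil}, using the closed form of $\BE_k[(1-\beta)^M]$ available in the local regime. With connection probability $\lambda/n$ and $k$ exploring agents, $M \sim \textsf{Bin}(k,\lambda/n)$, so
\[
\BE_k\left[(1-\beta)^M\right] = \left(1-\frac{\lambda\beta}{n}\right)^{k}.
\]

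First I would dispense with the boundary cases. For $\pi \leq \underline{\pi}$ the symmetric exploitation equilibrium identified at the start of Section \ref{sec: many_players_Econ} yields $k_n = 0$ at once. For $\pi > \bar{\pi}^{\text{local}}_\infty$, Lemma \ref{lem: monotonicity_local_explr_threshold} gives $\bar{\pi}_n \to \bar{\pi}^{\text{local}}_\infty < \pi$, so $\bar{\pi}_n < \pi$ for all sufficiently large $n$ and Theorem \ref{thm: exploration_equil} then delivers $k_n = n$. The single point $\pi = \bar{\pi}^{\text{local}}_\infty$ is covered by continuity of the middle-case formula, which evaluates to $1$ at that argument.

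For the main case $\underline{\pi} < \pi < \bar{\pi}^{\text{local}}_\infty$, convergence of $\bar{\pi}_n$ to $\bar{\pi}^{\text{local}}_\infty$ again forces $\pi < \bar{\pi}_n$ for all $n$ large enough, placing the economy in the asymmetric regime of Theorem \ref{thm: assymetric_equil}. Setting
\[
R(\pi) := \frac{(1-\delta)(\alpha(1-\pi) - \pi)}{\pi\,\delta\,\beta},
\]
I would substitute the closed-form expectation into the incentive inequalities in \eqref{eq: asymmetric_equil_cond} and rearrange them into
\[
\left(1 - \frac{\lambda\beta}{n}\right)^{k_n} \leq R(\pi) < \left(1 - \frac{\lambda\beta}{n}\right)^{k_n - 1}.
\]
Direct evaluation of $R$ at the endpoints returns $R(\underline{\pi}) = 1$ and $R(\bar{\pi}^{\text{local}}_\infty) = e^{-\lambda\beta}$, and $R$ is decreasing in $\pi$, so $R(\pi) \in (e^{-\lambda\beta}, 1)$ throughout the interval. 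Taking logarithms (all three quantities are negative) and dividing by $n$ delivers
\[
\frac{\log R(\pi)}{n\,\log(1-\lambda\beta/n)} \leq \frac{k_n}{n} < \frac{\log R(\pi)}{n\,\log(1-\lambda\beta/n)} + \frac{1}{n}.
\]
Since $n\log(1 - \lambda\beta/n) \to -\lambda\beta$, both bounds collapse to $-\log R(\pi)/(\lambda\beta)$, which matches the intermediate branch in \eqref{eq: kappa_expression}.

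The only step above routine algebra is the edge alignment: confirming that the mere convergence $\bar{\pi}_n \to \bar{\pi}^{\text{local}}_\infty$ (with no need for eventual monotonicity) suffices to push $\pi$ strictly below $\bar{\pi}_n$ eventually, and that the range $R(\pi) \in (e^{-\lambda\beta}, 1)$ keeps $k_n$ strictly between $0$ and $n$ so the asymmetric inequalities of Theorem \ref{thm: assymetric_equil} are truly the binding ones rather than the symmetric boundary cases.
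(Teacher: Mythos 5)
Your proposal is correct and follows essentially the same route as the paper's proof: dispose of the boundary regions via the exploitation threshold and Lemma \ref{lem: monotonicity_local_explr_threshold}, then in the intermediate region plug $\BE_k[(1-\beta)^M]=(1-\lambda\beta/n)^k$ into the asymmetric incentive condition \eqref{eq: asymmetric_equil_cond}, take logarithms, and use $n\log(1-\lambda\beta/n)\to-\lambda\beta$ to sandwich $k_n/n$. The only (harmless) difference is your continuity treatment of the single point $\pi=\bar{\pi}^{\text{local}}_\infty$, where the paper instead uses the eventual monotonicity of $\bar{\pi}_n$ to place that point in the full-exploration regime.
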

\begin{figure}[htbp]
\centering
\begin{tikzpicture}[scale=0.9]
\tikzmath{
\Del = 0.35;
\Al = 0.5;
\Be = 0.8;
\Lam = 3;
\PLowerBar = 0.332991;
\PInftylocal = 0.407380;
}
\begin{axis}[smooth,
    axis lines=left,
    xmin=0.31,
    xmax=0.43,
    ymax=1.1,
	axis line style={-stealth},
 	xtick={\PLowerBar,\PInftylocal},
 	xticklabels={$\underline{\pi}$,$\bar{\pi}_\infty^{\text{local}}$},
	ytick={0,1},
	x label style={at={(axis description cs:1,0)},anchor=north},
	y label style={at={(axis description cs:0,1)},anchor=east,rotate=-90},
	xlabel={$\pi$},
	ylabel={$\kappa(\pi)$},
    ]
    \addplot[thick,samples=400, domain = \PLowerBar:\PInftylocal ] {(ln((\Del*x*\Be*(\Be-(\Al*(1-\Be))))/((1-\Del)*(\Al*(1-(x*\Be))-(x*\Be)))))/(\Lam*\Be)};
    
    \addplot[dotted, thick] coordinates {(\PInftylocal,0) (\PInftylocal,1)};
    \addplot[thick,samples=400, domain = \PInftylocal:1] {1};
    
    \addplot[thick,samples=400, domain = 0:\PLowerBar] {0};
\end{axis}   
\end{tikzpicture}
\caption{Limiting fraction of explorers}
\label{fig: limiting_fraction}
\end{figure}
Figure~\ref{fig: limiting_fraction} illustrates the limiting fraction of exploring agents, $\kappa(\pi)$, as a function of the initial belief $\pi$. The function exhibits two kinks at $\underline{\pi}$ and $\bar{\pi}_\infty^{\text{local}}$, corresponding to equilibrium regime changes from full exploitation to the intermediate asymmetric region and then to full exploration. Notably, the graph is convex, indicating that the equilibrium fraction of explorers grows at an increasing rate as the initial belief rises.

We next characterize the finite-$n$ equilibrium exploration probability $\mu^e_n$ in the intermediate region $[\underline{\pi},\bar{\pi}_n^{\mathrm{local}}]$ where the symmetric equilibrium is mixed (Theorem~\ref{thm: symmetric_mixed_equil}), and then find its limit as $n\to\infty$. 
\begin{proposition}[Equilibrium mixing probability]
\label{prop:equil_mix_prob}
For finite $n$ and every $\pi\in[\underline{\pi},\bar{\pi}_n^{\mathrm{local}}]$,
\begin{equation}
\label{eq:equil_mix_prob}
    \mu^e_n = \frac{n}{\lambda\beta}\left[1-\left(\frac{C_1(\pi)}{C_2(\pi)}\right)^{1/(n-1)}\right],
\end{equation}
with limit
\begin{equation}
\label{eq:limit_equil_mix_prob}
    \lim_{n\to\infty}\mu^e_n = \frac{1}{\lambda\beta}\log\frac{C_2(\pi)}{C_1(\pi)}.
\end{equation}
\end{proposition}
This proposition shows, interestingly, that the limiting exploration probability along a sequence of mixed-strategy equilibria equals the asymptotic fraction of explorers along a sequence of pure-strategy equilibria---an instance of the \emph{exact law of large numbers} in the sense of~\cite{sun2006exact}.

\subsection{Global Connections} 
\label{subs: global_conn}
The analysis in the global regime (where $M = |\mathcal{C}|-1$) is more intricate. In this regime, an agent observes all members of her connected component in the second period. This setting offers greater potential for knowledge diffusion due to the increased number of connections, but also creates stronger free-riding incentives, leading to lower levels of exploration.

A coupling argument (e.g., Theorem~2.1 in~\cite{bollobas2001random}) shows that the distribution of the connected component size $|\mathcal{C}|$ of a \emph{typical} vertex in an ER random graph is first-order stochastically increasing in $\lambda$. Since $x\mapsto (1-\beta)^x$ is decreasing, equation~\eqref{eq: n_player_pi_bar} implies that the exploration threshold $\bar{\pi}$ is increasing in $\lambda$, confirming the presence of free-riding in the global case.

To study the limiting behavior of the exploration threshold, we need the asymptotic distribution of $|\mathcal{C}|$, the size of a typical component. A classical result in random graph theory establishes that the giant component undergoes a phase transition: its size is $O(\log n)$ in the subcritical regime ($\lambda < 1$) and $O(n)$ in the supercritical regime ($\lambda > 1$). This phenomenon in turn impacts the distribution of the size of a \emph{typical} (not necessarily giant) component containing a uniformly chosen vertex.

Let $B(\lambda)$ denote the \emph{total} number of descendants in a branching process with $\textsf{Poisson}(\lambda)$ offspring distribution. In the (sub)critical regime ($\lambda \leq 1$), its distribution is known as the $\textsf{Borel}(\lambda)$ distribution:
\begin{equation*}
    \BP\left(B(\lambda)=k\right) = \frac{\mathrm{e}^{-\lambda k}(\lambda k)^{k-1}}{k!}
\end{equation*}
In the supercritical regime ($\lambda > 1$), the total population becomes extinct with probability $\zeta(\lambda)$, which is the unique positive solution to the following fixed-point relation:
\begin{equation}
\label{eq: extinct_prob}
    \zeta = \mathrm{e}^{-\lambda (1-\zeta)}
\end{equation}
In the supercritical regime, a uniformly random vertex falls in the giant (non-extinct) component with probability $1-\zeta$, and hence the asymptotic size of its component is $\infty$. With probability $\zeta$, it belongs to a finite-size extinct component whose size follows $\textsf{Borel}(\lambda \zeta)$. We use the following notation to refer to this mixture distribution:
\begin{equation*}
    \textsf{Borel}_\zeta (\lambda) \coloneq \left\{ 
    \begin{array}{lc}
        \infty & \text{with probability } 1-\zeta \\
        \textsf{Borel}(\lambda \zeta) & \text{with probability } \zeta
    \end{array}\right.
\end{equation*}
We can further define $\zeta(\lambda) \equiv 1$ for all $\lambda \leq 1$, so that the standard Borel distribution can be represented as a special case of the mixture Borel: $\textsf{Borel}(\lambda) = \textsf{Borel}_{\zeta(\lambda)} (\lambda)$. We denote a random variable with the standard Borel distribution by $B(\lambda)$, and one with the mixture Borel distribution with extinction probability $\zeta$ by $B_\zeta(\lambda)$.

Using these observations, we completely characterize the asymptotic distribution of the size of a typical connected component $|\mathcal{C}|$ in the following proposition:
\begin{proposition}
\label{prop: asymptotic_pi_bar}
Let $p =\lambda/n$, and denote the size of a typical connected component by $|\mathcal{C}|$. Then:
\begin{enumerate}[label=(\roman*)]
    \item  \label{item: weak_conv} $|\mathcal{C}|$ converges in distribution to $\textsf{Borel}_{\zeta(\lambda)} (\lambda)$.
    \item \label{item: asymp_pi_bar} As $n\to \infty$, the full exploration threshold converges to
    \begin{equation}
    \label{eq: asymp_pi_bar}
        \bar{\pi}_\infty^{\mathrm{global}}\coloneq\lim_{n \to \infty}\bar{\pi}^\mathrm{global}_n = \frac{\alpha (1-\delta)}{\beta\left((1+\alpha)(1-\delta)+\delta \left(\beta-\alpha(1-\beta)\right) \BE\left[(1-\beta)^{B_{\zeta(\lambda)} (\lambda)-1}\right]\right)}\,.
    \end{equation}
\end{enumerate}
\end{proposition}
To further analyze $\bar{\pi}_\infty^{\mathrm{global}}$ derived from equation~\eqref{eq: asymp_pi_bar}, we need to determine the probability-generating function (henceforth, PGF) of the mixture Borel distribution $B_{\zeta(\lambda)} (\lambda)$. In the following lemma, we characterize that through a fixed-point argument.
\begin{lemma}[\cite{Alon2000}]
\label{lem: mixture_borel_PGF}
    Denote the probability-generating function (PGF) of the mixture Borel distribution by $\psi_\lambda(z) \coloneq \BE\left[z^{B_{\zeta(\lambda)} (\lambda)}\right]$ for $z \in [0,1]$, where $\zeta(\lambda)$ solves~\eqref{eq: extinct_prob}. Then $\psi(z)$ satisfies the following fixed-point relation:
    \begin{equation}
        \label{eq: mixture_borel_PGF}
        \psi_\lambda(z) = z \mathrm{e}^{\lambda (\psi_\lambda(z)-1)}
    \end{equation}
\end{lemma}

\paragraph{Rapid fall of exploration in the global regime (small $\beta$ and $\lambda \approx 1$).} As Figure~\ref{fig: rapid_rise} illustrates, there is a rapid tightening of the exploration region in the case of global connections as $\lambda$ increases from values slightly below $1$ to values just above. More precisely, the marginal effect of increasing $\lambda$ on the exploration threshold $\bar\pi^{\text{global}}_\infty$ undergoes a significant shift at $\lambda=1$. This effect is more significant when $\beta$ is close to zero, which is the most relevant region in the innovation and entrepreneurship research, when the probability of success is extremely small. 

\begin{figure}[htbp]
\centering
\begin{tikzpicture}[]
\begin{axis}[smooth,
    axis line style={thick},
    x label style={at={(axis description cs:1,0)},anchor=south},
    xtick={1, 2, 3, 4},
    xticklabels={1, 2, 3, 4},
    yticklabels={},
    ytick=\empty,
    xmax=5,
    axis lines=left,
    xlabel={$\lambda$},
    ylabel={$\bar\pi^{\text{global}}_\infty$},
    x label style={below},
    ]
    \addplot[line width=1pt, thick] table[col sep=comma] {pi_infty_global_results.csv};
\end{axis}   
\end{tikzpicture}
\captionsetup{width=.5\linewidth}
\caption{Rapid tightening of the exploration region
$[\delta=0.1, \alpha=0.001,\beta=0.002]$}
\label{fig: rapid_rise}
\end{figure}

The emergence of a giant connected component for $\lambda>1$ is a classic result in random graph theory \citep{erd6s1960evolution}. In our setting, this connects the focal agent to a component of size $\Theta(n)$, expands the scope for free-riding, and raises the global exploration threshold $\bar\pi^{\mathrm{global}}_\infty$. For $\lambda<1$, components remain small---$O(\log n)$---so free-riding is limited and the threshold stays low. The key implication is that $\bar\pi^{\mathrm{global}}_\infty$ changes sharply near $\lambda=1$: as $\beta\to0$, the change in the threshold is concentrated in a critical window of width $\Theta(\sqrt{\beta})$ and magnitude $\Theta(\beta^{3/2})$. The next theorem formalizes this scaling and shows that, after rescaling, the threshold curves collapse onto a single explicit ``crossover function''.

\begin{theorem}[Scaling limits]
\label{thm:scaling}
Let $\beta \to 0$ with $\tau(\alpha,\beta)=\alpha/\beta(1+\alpha) \to \tau_0 \in (0,1)$. Then for every fixed $x \in \mathbb{R}$:
\begin{equation*}
\lim_{\beta\downarrow0}\ \beta^{-3/2}\Big(\bar{\pi}_\infty^{\mathrm{global}}(1+x\sqrt\beta,\beta)-\bar{\pi}_\infty^{\mathrm{global}}(1,\beta)\Big)
=\frac{\tau_0(1-\tau_0)\delta}{1-\delta}\,\big(c(x)-\sqrt2\big)\,,
\end{equation*}
where $c(x)\coloneq x + \sqrt{x^2+2}$.
\end{theorem}
\begin{figure}[htbp]
\begin{center}
\pgfplotsset{normalsize}
\begin{tikzpicture}[scale=0.9]
\tikzmath{
\Umin = -6;
\Umax = 6;
}
\begin{axis}[smooth,
    axis lines=left,
    xmin=\Umin,
    xmax=\Umax,
    ymin=0,
    ymax=13,
    axis line style={-stealth},
    xtick={-6,-3,0,3},
    ytick={0,5,10},
    x label style={at={(axis description cs:1,0)},anchor=north},
    y label style={at={(axis description cs:0,1)},anchor=east,rotate=-90},
    xlabel={$x$},
    ylabel={$c(x)$},
]
    \addplot[thick, samples=400, domain=\Umin:\Umax] {x + sqrt(x^2 + 2)};

    \addplot[dotted, thick] coordinates {(0,0) (0,13)};

    \addplot[dashed, thick, samples=200, domain=0:\Umax] {2*x};

\end{axis}
\end{tikzpicture}
\end{center}
\caption{Crossover function \(c(x)=x+\sqrt{x^2+2}\).}
\label{fig:crossover}
\end{figure}
Figure~\ref{fig:crossover} plots the crossover function $c(x)=x+\sqrt{x^{2}+2}$, which governs the behavior of $\bar\pi^{\mathrm{global}}_\infty$ near the phase transition of ER random graphs. The proposition identifies a critical window $\lambda=1+x\sqrt{\beta}$ within which $\bar\pi^{\mathrm{global}}_\infty$ shifts by $\Theta(\beta^{3/2})$; after rescaling by $\beta^{-3/2}$, the entire family of threshold curves collapses onto the closed-form profile $c(x)-\sqrt{2}$.

Outside this window, the PGF term driving the denominator of $\bar\pi^{\mathrm{global}}_\infty$ is essentially stable. For $\lambda<1$, the absence of a giant component means extra links barely affect the component-size distribution, so the threshold is nearly flat---consistent with $c(x)\to 0$ as $x\to-\infty$. For $\lambda>1$, a giant component already exists and further increases in $\lambda$ affect the PGF only gradually, yielding the approximately linear right tail $c(x)\sim 2x$ as $x\to+\infty$.

The crossover function is \emph{asymmetric} around $x=0$: it flattens quickly for $x<0$ but continues to rise for $x>0$. This reflects the underlying phase transition---crossing $\lambda=1$ upward creates access to a giant component and sharply expands free-riding incentives, while moving further below $\lambda=1$ produces no analogous jump, since component sizes remain logarithmic in $n$.

\section{Social Surplus in Erd\H{o}s-R\'enyi Graphs with Local Connections}
\label{sec: social_surplus}
We examine the social surplus in the local-connection economy of Section~\ref{subs: local_connect}. We first analyze the equilibrium social surplus, under pure- and mixed strategies, in Section~\ref{subs: equil_soc_surplus}, characterize the socially optimal thresholds (Section~\ref{subs:optimal_thresholds}), and show that over-exploitation and under-exploration persist as robust features despite the large number of players. Section~\ref{subs:optimal_mixing} then studies the optimal mixing probability $\mu^*$ and compares it to the equilibrium mixed strategy $\mu^e$ from Theorem~\ref{thm: symmetric_mixed_equil}.

Suppose that out of $n$ players, $k$ agents choose the risky arm in the first period, and let the resulting social surplus be denoted as $u_{k,n}(\pi)$. Further, in the ER graphs with local connections, let $q_a(b) = \binom{a}{b}p^b(1-p)^{a-b}$ represent the probability of meeting $b$ agents out of a specific set of $a$ individuals in the second period, then
\begin{small}
\begin{equation}
\label{eq: social_welfare_func}
\begin{gathered}
    u_{k,n}(\pi) = (1-\delta) k\left(\pi \beta-\alpha(1-\pi\beta)\right)\\
    + \delta k \pi (\beta-\alpha(1-\beta)) \beta+\delta k \sum_{m=0}^{k-1}q_{k-1}(m)\left[\pi (\beta-\alpha(1-\beta))(1-\beta)^{m+1}-\alpha(1-\pi)\right]^+ \\
    +\delta k \sum_{m=0}^{k-1} q_{k-1}(m)\pi(\beta-\alpha(1-\beta))(1-\beta)\left(1-(1-\beta)^m\right)\\
    +\delta (n-k) \sum_{m=0}^k q_k(m)\left[\pi (\beta-\alpha(1-\beta))(1-\beta)^m-\alpha(1-\pi)\right]^+ \\ 
    +\delta(n-k) \sum_{m=0}^k q_k(m) \pi (\beta-\alpha(1-\beta))\left(1-(1-\beta)^m\right).
\end{gathered}
\end{equation}
\end{small}
The first line of $u_{k,n}$ is the first-period payoff earned by the $k$ exploring agents. The next two lines give their discounted second-period payoff, which has three components: the expected payoff when an agent herself received a conclusive signal in the first period; the expected payoff when neither she nor any of her second-period contacts did; and the expected payoff when she did not receive a high output but at least one contact did. The last two lines give the discounted second-period payoff of the remaining $n-k$ exploiting agents, decomposed into two components: their payoff when none of their contacts in the exploring group received a high output, and their payoff when at least one did.

The following lemma examines the marginal value of adding one more explorer, $\Delta u_k\coloneq u_{k+1}-u_k$, which will prove useful when we characterize the equilibrium social surplus and the social optimum. We use the notation $Q_a(b)\coloneq\sum_{m\leq b}q_a(m)$ for the cumulative distribution of $q_a$, with the convention $Q_0(0)=q_0(0)=1$.
\begin{lemma}
\label{lem: Delta_u}
The marginal value of one more exploring agent takes the following form:
\begin{enumerate}[label=(\roman*)]
    \item \label{item: Deltau_1} On $\frac{\pi}{1-\pi} \leq \frac{\alpha}{\beta-\alpha(1-\beta)}$, or equivalently $\pi \leq \tau$, it holds that
    \begin{equation}
    \label{eq: Deltau_1}
    \begin{gathered}
        \Delta u_k(\pi)= (1-\delta)\big(\pi \beta-\alpha(1-\pi \beta)\big) \\ +\delta \pi (\beta-\alpha(1-\beta)) \beta(1-p\beta)^k \left(1+(n-1)p-\frac{kp(1-p)\beta}{1-p\beta}\right)\,.
    \end{gathered}
    \end{equation}
    \item \label{item: Deltau_2} On $\frac{\pi}{1-\pi}\geq \frac{\alpha}{(\beta -\alpha(1-\beta))(1-\beta)^{k+1}}$, it holds that $\Delta u_k(\pi) = (1-\delta)\big(\pi \beta-\alpha(1-\pi\beta)\big)$.
    \item \label{item: Deltau_3} On $\frac{\alpha}{(\beta -\alpha(1-\beta))(1-\beta)^r} \leq \frac{\pi}{1-\pi}\leq \frac{\alpha}{(\beta -\alpha(1-\beta))(1-\beta)^{r+1}}$ and $k\geq r\geq 0$, it holds that $$\Delta u_k(\pi) = (1-\delta)(\pi \beta-\alpha(1-\pi \beta)) + \delta \big[\pi (\beta-\alpha(1-\beta))B_k-\alpha(1-\pi)A_k\big]\,,$$
    where
    \begin{equation}
    \label{eq: Ak}
    \begin{gathered}
    A_k(\pi) \coloneq (k+1) Q_k(r-1)-kQ_{k-1}(r-1) - (n-k)Q_k(r) +(n-k-1)Q_{k+1}(r)\,\\
    B_k(\pi) \coloneq -(k+1)(1-\beta)\sum_{m=r}^{k} q_k(m)(1-\beta)^m + k(1-\beta)\sum_{m=r}^{k-1} q_{k-1}(m)(1-\beta)^m \\ + (n-k)\sum_{m=r+1}^{k} q_k(m)(1-\beta)^m - (n-k-1)\sum_{m=r+1}^{k+1} q_{k+1}(m)(1-\beta)^m
    \end{gathered}
    \end{equation}
\end{enumerate}
\end{lemma}
The proof follows directly by observing that the piecewise linear components in~\eqref{eq: social_welfare_func} are positive whenever $m+1\leq r$ and $m\leq r$ in the first and second components, respectively, and is therefore omitted.

\subsection{Equilibrium Social Surplus}
\label{subs: equil_soc_surplus}
\subsubsection{Pure-Strategy}
In this part we study the equilibrium social surplus when agents invoke pure strategies.
For $\pi\leq\underline{\pi}$, no agent explores and the equilibrium social surplus is zero. 
Figure~\ref{fig: finite_n_equil_ss} plots $u_{k_n,n}(\pi)/n$ for finite $n$ and $\pi\in(\underline{\pi},\tau)$, where the asymmetric pure-strategy equilibrium prevails. Within a fixed equilibrium region (with $k_n$ constant), an increase in $\lambda$ shifts the distribution of $M$ in the sense of first-order stochastic dominance, which by the above representation \emph{increases} the equilibrium social surplus. The next proposition shows that at every threshold where the economy undergoes an equilibrium regime change, the social surplus falls---confirming the intuition from the two-player case.
\begin{figure}[htbp]
\centering
\pgfplotsset{normalsize}
\begin{tikzpicture}[scale=1]
\begin{axis}[smooth,
    axis line style={thick},
    x label style={at={(axis description cs:1,0)},anchor=south},
    y label style={at={(axis description cs:0,1)},anchor=east,rotate=-90},
    yticklabels={},
    ytick=\empty,
    xmax = 11,
    ymin = 0,
    axis lines=left,
    xlabel={$\lambda$},
    x label style={below},
    ylabel={$\frac{u_{k_n,n}}{n}$}
    ]
    \addplot[thick] table[col sep=comma,samples=10000] {finite_n_equil_ss.csv};
\end{axis}   
\end{tikzpicture}
\captionsetup{width=.6\linewidth}
\caption{Finite-$n$ average of pure-strategy equilibrium social surplus $[\pi=0.5, \delta=0.045, \alpha=0.45,\beta=0.6, n=16]$}
\label{fig: finite_n_equil_ss}
\end{figure}
\begin{proposition}
\label{prop: equil_soc_surplus_discont}
The pure-strategy equilibrium social surplus falls discontinuously at every $\lambda$ where the economy undergoes an equilibrium regime change.
\end{proposition}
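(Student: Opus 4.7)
At a threshold $\lambda^{*}$ of regime change, Theorem~\ref{thm: assymetric_equil} forces $\pi = \pi_{k-1,n}(\lambda^{*})$ for some $k$ with $2 \le k \le n$, so the equilibrium number of explorers drops from $k$ (just below $\lambda^{*}$) to $k-1$ (just above). Because $u_{k,n}(\pi;\lambda)$ is continuous in $\lambda$ for each fixed $k$, the left- and right-limits of the equilibrium social surplus at $\lambda^{*}$ differ by exactly $\Delta u_{k-1}(\pi)$ evaluated at $p = \lambda^{*}/n$. My entire task is thus to show $\Delta u_{k-1}(\pi) > 0$ at this threshold.

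The plan is to combine two ingredients: the indifference condition that \emph{defines} the threshold, and the explicit formula of Lemma~\ref{lem: Delta_u}. First, inspection of the closed form for $\pi_{k-1,n}$ shows $\pi_{k-1,n} \le \alpha/(1+\alpha)$ (its denominator is at least $(1+\alpha)(1-\delta)$), so case (i) of Lemma~\ref{lem: Delta_u} applies and gives
\begin{equation*}
\Delta u_{k-1}(\pi) = (1-\delta)\bigl(\pi - \alpha(1-\pi)\bigr) + \delta \pi \beta (1-p\beta)^{k-1}\left(1 + (n-1)p - \frac{(k-1)p(1-p)\beta}{1-p\beta}\right).
\end{equation*}
Second, on $\pi \le \alpha/(1+\alpha)$ the $(\cdot)^{+}$ terms in \eqref{eq: general_payoff_functions} all vanish, and $\BE_{k-1}[(1-\beta)^{M}] = (1-p\beta)^{k-1}$ in the local regime, so the indifference $v_{k}(\pi) = w_{k-1}(\pi)$ reduces to
\begin{equation*}
(1-\delta)\bigl(\alpha(1-\pi) - \pi\bigr) = \delta \pi \beta (1-p\beta)^{k-1}.
\end{equation*}
Substituting this identity into the formula for $\Delta u_{k-1}$ cancels the first summand against part of the second, leaving
\begin{equation*}
\Delta u_{k-1}(\pi) = \delta \pi \beta \, p \, (1-p\beta)^{k-2}\,\Big((n-1)(1-p\beta) - (k-1)(1-p)\beta\Big).
\end{equation*}

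The remaining step is an elementary inequality: rewriting the bracket as $(n-1) - (k-1)\beta - p\beta(n-k)$, one checks it is bounded below by $(n-k)(1-p\beta) > 0$ when $k < n$ and equals $(n-1)(1-\beta) > 0$ when $k = n$. Hence $\Delta u_{k-1}(\pi) > 0$ and the discontinuous fall is established. I expect the only slightly delicate point to be matching the two algebraic expressions after substitution; once the indifference identity is inserted in the right place, positivity of the bracket is routine.
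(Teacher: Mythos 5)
Your proof is correct and follows essentially the same route as the paper: both identify the jump at a regime-change threshold with $\Delta u$ evaluated at the threshold belief, apply part (i) of Lemma \ref{lem: Delta_u}, and verify positivity of the resulting bracket. Your substitution of the indifference identity is just a cleaner way to perform the same algebra, and your bracket $(n-1)(1-p\beta)-(k-1)(1-p)\beta$ is in fact the corrected form of the factor appearing (with an apparent typo) in the paper's displayed expression.
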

Figure~\ref{fig: finite_n_equil_ss} shows that the pure-strategy equilibrium social surplus increases in $\lambda$ within each equilibrium region, with discontinuous drops at regime-change thresholds. The largest drop occurs at the transition from full exploration to the intermediate region. This figure mirrors panel (b) of Figure~\ref{fig: SS}, with multiple drops reflecting the richer equilibrium structure for $n>2$.

\paragraph{Large $n$ limit.} Define
\begin{equation}
\label{eq:ubar_infty_def}
\bar{u}_\infty(\pi)\coloneq\lim_{n\to\infty}\frac{u_{k_n,n}(\pi)}{n}\,.
\end{equation}
By Proposition~\ref{prop: equil_num_explr_agents}, as $n\to\infty$, the fraction $k_n/n\to\kappa$ and we use this to show in the next proposition that 
\begin{subnumcases}{\bar{u}_\infty(\pi)=\label{eq:ubar_inf}}
\label{eq:ubar_inf_1}0\,, & $\pi \leq \underline{\pi}$\,,\\[6pt]
\label{eq:ubar_inf_2}(1-\delta)\beta^{-1}\!\left(\pi\beta-\alpha(1-\pi\beta)\right)+\delta\pi(\beta-\alpha(1-\beta))\beta\,, & $\underline{\pi} \leq \pi < \bar{\pi}_\infty^{\mathrm{local}}$\,,\\[12pt]
(1-\delta)\!\left(\pi\beta-\alpha(1-\pi\beta)\right)+\delta\pi(\beta-\alpha(1-\beta))\!\left(1-(1-\beta)e^{-\lambda\beta}\right)\nonumber\\
\label{eq:ubar_inf_3}\quad+\delta\,\BE_{M\sim\mathsf{Pois}(\lambda)}\!\left[\!\left(\pi(\beta-\alpha(1-\beta))(1-\beta)^{M+1}-\alpha(1-\pi)\right)^{\!+}\right], & $\pi \geq \bar{\pi}_\infty^{\mathrm{local}}$\,.
\end{subnumcases}
\begin{proposition}[Large-$n$ limit of the average pure-strategy equilibrium social surplus]
\label{prop: asympt_equil_soc_surplus}
$\bar{u}_\infty$ follows~\eqref{eq:ubar_inf} and is weakly increasing in $\lambda$ for every fixed $\pi$. Moreover, as a function of $\lambda$:
\begin{enumerate}[label=(\roman*)]
    \item \label{enum:ubar_1}For $\pi\in\left(\underline{\pi},\tau\right)$, there exists a threshold $\lambda(\pi)$ such that $\bar{u}_\infty(\pi,\lambda)$ is constant in $\lambda$ for $\lambda\geq\lambda(\pi)$, and follows~\eqref{eq:ubar_inf_3} for $\lambda<\lambda(\pi)$.
    \item \label{enum:ubar_2}For $\pi\geq\tau$, $\bar{u}_\infty$ follows~\eqref{eq:ubar_inf_3} for all $\lambda$.
\end{enumerate}
\end{proposition}
Figure~\ref{fig: lambda_effect_soc_surplus} illustrates the asymptotic average equilibrium social surplus against $\lambda$ for two values of $\pi$. For intermediate $\pi\in(\underline{\pi},\tau)$, the surplus becomes flat in $\lambda$ once $\lambda\geq\lambda(\pi)$ (Panel~\ref{fig: moderate_pi_lambda_effect}). The intuition is that an increase in $\lambda$ induces more free-riding and fewer explorers, reducing the social cost of first-period exploration but also shrinking the benefits from second-period information exchange; in the large-$n$ limit, these two effects exactly offset. This is also the region where, for finite $n$, the surplus exhibits discontinuous jumps at each regime change (Figure~\ref{fig: finite_n_equil_ss}); as $n\to\infty$, these jumps vanish and the per-capita surplus becomes flat in $\lambda$.
\begin{figure}[htbp]
\begin{center}
\begin{subfigure}[t]{.45\textwidth}
\begin{center}
\pgfplotsset{normalsize}
\begin{tikzpicture}[scale=0.9]
\begin{axis}[smooth,
    axis line style={thick},
    x label style={at={(axis description cs:1,0)},anchor=south},
    y label style={at={(axis description cs:0,1)},anchor=east,rotate=-90},
    yticklabels={},
    ytick=\empty,
    xtick={7.6076},
    xticklabels  = {$\lambda(\pi)$},
    ymin = 0.05,
    ymax = 0.061,
    xmax = 22,
    axis lines=left,
    xlabel={$\lambda$},
    ylabel={$\bar{u}_\infty$},
    x label style={below},
    ]
    \addplot[line width=1pt, thick] table[col sep=comma] {moderate_pi_soc_surplus.csv};
    
    \draw [dashed] (axis cs: 7.6076,0.05) --(7.6076,0.0599);
\end{axis}     
\end{tikzpicture}
\caption{$\underline{\pi}< \pi <\tau$}
\label{fig: moderate_pi_lambda_effect}
\end{center}
\end{subfigure}\quad
\begin{subfigure}[t]{.45\textwidth}
\begin{center}
\pgfplotsset{normalsize}
\begin{tikzpicture}[scale=0.9]
\begin{axis}[smooth,
    axis line style={thick},
    x label style={at={(axis description cs:1,0)},anchor=south},
    y label style={at={(axis description cs:0,1)},anchor=east,rotate=-90},
    yticklabels={},
    ytick=\empty,
    xtick=\empty,
    ymin = 0.2555,
    ymax = 0.262,
    xmax = 22,
    axis lines=left,
    xlabel={$\lambda$},
    ylabel={$\bar{u}_\infty$},
    x label style={below},
    ]
    \addplot[line width=1pt, thick] table[col sep=comma] {large_pi_soc_surplus.csv};
\end{axis}
\end{tikzpicture}
\caption{$\pi \geq \tau$}
\label{fig: large_pi_lambda_effect}
\end{center}
\end{subfigure}
\caption{Effect of $\lambda$ on $\bar{u}_\infty$}
\label{fig: lambda_effect_soc_surplus}
\end{center}
\end{figure}

\subsubsection{Mixed-Strategy}
We now study the equilibrium social surplus when agents use symmetric mixed strategies in the intermediate region, each exploring independently with probability $\mu^e_n$ as characterized in~\eqref{eq:equil_mix_prob}. Let $k^{\text{mixed}}_n\sim\mathsf{Bin}(n,\mu^e_n)$ denote the resulting number of explorers.\footnote{This differs from the pure-strategy equilibrium number of explorers $k_n$ in Proposition~\ref{prop: equil_num_explr_agents}.}

The average equilibrium social surplus under symmetric mixed strategies is
\begin{equation*}
\bar{u}_n^{\text{mixed}}\coloneq\frac{u_{k^{\text{mixed}}_n,n}}{n}\,.
\end{equation*}
Our main result here is that $\bar{u}_n^{\text{mixed}}$ converges \emph{almost surely} to $\bar{u}_\infty$, the limiting average social surplus under pure strategies~\eqref{eq:ubar_infty_def}.
\begin{theorem}[Large-$n$ limit of the average mixed-strategy equilibrium social surplus]
\label{thm:ubar_mixed_infty}
$\bar{u}_n^{\mathrm{mixed}}$ converges almost surely to $\bar{u}_\infty$.
\end{theorem}
The proof applies a version of the continuous mapping theorem, exploiting the fact that $\mu^e_n\to\kappa(\pi)$ on the intermediate region (Proposition~\ref{prop:equil_mix_prob}), so that $k^{\text{mixed}}_n/n$ converges almost surely to $\kappa(\pi)$, driving the same limiting average social surplus as in the pure-strategy case.

\subsection{Optimal Thresholds}
\label{subs:optimal_thresholds}
One would expect the social optimum to mirror the two-player case: exploration (respectively, exploitation) is socially optimal when the initial belief exceeds (respectively, falls below) some threshold. Establishing this in the multi-agent economy, however, requires a more involved analysis.

To determine when adding one more explorer raises social welfare, we need to sign $\Delta u_k=u_{k+1}-u_k$. Lemma~\ref{lem: Delta_u} decomposed $\Delta u_k$ into two components $A$ and $B$. The following lemma is the cornerstone of the social optimum analysis, and its proof relies on the first-order stochastic dominance relation $\mathsf{Bin}(k+1,p)\succeq\mathsf{Bin}(k,p)$.
\begin{lemma}
\label{lem: B_A_inequality}
For every $k\geq r\geq 0$,
\begin{equation}
\label{eq: B_A}
    B_k \geq \max\big\{(1-\beta)^r A_k, (1-\beta)^{r+1}A_k \big\}.
\end{equation}
\end{lemma}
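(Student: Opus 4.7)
The plan is to handle both inequalities uniformly by fixing $c\in\{r,r+1\}$ and showing $B_k-(1-\beta)^c A_k\geq 0$. The key step is to introduce the non-increasing function
\[
g_c(x):=(1-\beta)^{\max(x,c)},
\]
and reassemble the eight sums and probabilities appearing in $B_k$ and $A_k$ into four expectations of the form $\BE[g_c(\cdot)]$ evaluated at binomial variables $M_\ell\sim\textsf{Bin}(\ell,p)$ for $\ell\in\{k-1,k,k+1\}$ (with possible $+1$ shifts). The final step is then an FOSD argument applied to the natural coupling $M_{k-1}\leq M_k\leq M_{k+1}$.

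Concretely, after peeling off the trivially nonnegative contribution $(1-\delta)(1-(1-\beta)^c)\geq 0$, a direct check---splitting each expectation on $\{M_\ell\leq c-1\}$ versus $\{M_\ell\geq c\}$---yields, for both $c=r$ and $c=r+1$,
\begin{align*}
(1-\beta)\sum_{m=r}^{k} q_k(m)(1-\beta)^{m}+(1-\beta)^c Q_k(r-1)&=\BE[g_c(M_k+1)],\\
\sum_{m=r+1}^{k+1} q_{k+1}(m)(1-\beta)^{m}+(1-\beta)^c Q_{k+1}(r)&=\BE[g_c(M_{k+1})],
\end{align*}
together with the analogous pairings $(1-\beta)\sum_{m=r}^{k-1}q_{k-1}(m)(1-\beta)^m+(1-\beta)^c Q_{k-1}(r-1)=\BE[g_c(M_{k-1}+1)]$ and $\sum_{m=r+1}^k q_k(m)(1-\beta)^m+(1-\beta)^c Q_k(r)=\BE[g_c(M_k)]$. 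These identities hold at $c=r$ on the nose; at $c=r+1$ the apparent summation-range mismatch cancels through the boundary contribution $q_\ell(r)(1-\beta)^{r+1}$. Substituting collapses $B_k-(1-\beta)^c A_k$ into
\[
(1-\delta)(1-(1-\beta)^c)+\delta\Bigl\{\bigl[\BE g_c(M_k)-\BE g_c(M_k+1)\bigr]+k\bigl[\BE g_c(M_{k-1}+1)-\BE g_c(M_k+1)\bigr]+(n-k-1)\bigl[\BE g_c(M_k)-\BE g_c(M_{k+1})\bigr]\Bigr\},
\]
and a single Bernoulli-$p$ coupling of $M_{k-1},M_k,M_{k+1}$ (using independent Bernoulli increments $X_k,X_{k+1}$) combined with the monotonicity of $g_c$ makes each of the three bracketed differences nonnegative.

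The main obstacle is discovering the collection step: it is not obvious that the eight separate sums should reassemble into expectations of a single clipped function. What makes the rewriting work is that the boundaries between the sums indexed by $m\geq r$ or $m\geq r+1$ and the probabilities $Q_\ell(r-1)$ or $Q_\ell(r)$ are placed exactly one apart, which is precisely the pattern produced by $\max(\cdot,c)$-clipping for $c\in\{r,r+1\}$. Once the identifications are in hand, FOSD finishes the argument routinely.
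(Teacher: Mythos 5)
Your proposal is correct and takes essentially the same route as the paper's proof: both peel off the nonnegative term $(1-\delta)\bigl(1-(1-\beta)^c\bigr)$, rewrite the remaining sums as expectations of the decreasing clipped function $(1-\beta)^{M\vee c}$ under $\textsf{Bin}(k-1,p)$, $\textsf{Bin}(k,p)$, $\textsf{Bin}(k+1,p)$, and finish with Binomial first-order stochastic dominance (your exact regrouping into three nonnegative differences versus the paper's two FOSD bounds followed by a Bernoulli-$p$ decomposition of $\textsf{Bin}(k+1,p)$). One cosmetic remark: it is the $c=r$ identifications that need the boundary-term cancellation and the $c=r+1$ ones that hold on the nose, not the reverse as you state, which does not affect the correctness of the argument.
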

Lemma~\ref{lem: B_A_inequality} gives tight control over $\Delta u_k$ on $[\tau,1]$. For $\pi\leq\tau$, we need an additional result:
\begin{lemma}
\label{lem: decreasing_marginals}
For every fixed $\pi\leq\tau$, the marginal value $\Delta u_k(\pi)$ is decreasing in $k$.
\end{lemma}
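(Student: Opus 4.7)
The plan is to invoke the explicit formula in part \ref{item: Deltau_1} of Lemma \ref{lem: Delta_u}, which on the region $\pi \leq \alpha/(1+\alpha)$ gives
\[
\Delta u_k(\pi) = (1-\delta)\big(\pi-\alpha(1-\pi)\big) + \delta \pi \beta \cdot g(k),
\]
where
\[
g(k) := (1-p\beta)^k \left(1 + (n-1)p - \frac{k p (1-p) \beta}{1-p\beta}\right).
\]
The first summand is independent of $k$, and since $\delta\pi\beta \geq 0$, the lemma is reduced to showing $g(k+1) \leq g(k)$ for every $0 \leq k \leq n-1$.

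Next I would compute the discrete difference directly. Writing $a := 1-p\beta$, $b := 1 + (n-1)p$, and $c := p(1-p)\beta/(1-p\beta)$, a short rearrangement yields
\[
g(k+1) - g(k) = a^k \Big[ c\big(p\beta(k+1) - 1\big) - p\beta \, b \Big].
\]
If $p\beta(k+1) \leq 1$ the bracket is evidently nonpositive; otherwise, after clearing denominators, the desired bound $g(k+1) \leq g(k)$ reduces to
\[
(1-p)\, p \beta (k+1) \leq (1-p\beta)\big(1 + (n-1)p\big) + (1-p).
\]
The left-hand side is monotone increasing in $k$ and the right-hand side is free of $k$, so the binding case is $k = n-1$.

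The remaining step is to verify this worst-case inequality. Expanding both sides and collecting terms, the slack can be rewritten as
\[
2 + (n-2)p - p\beta(n+1) + p^2\beta = (1-p)(2 - p\beta) + n p (1-\beta),
\]
which is manifestly nonnegative for $p, \beta \in [0,1]$ since each of the two summands is nonnegative. The main obstacle I foresee is precisely spotting this clean decomposition of the slack into a sum of two nonnegative pieces; the rest of the argument is a chain of algebraic reductions. With the displayed identity in hand, combining it with the monotonicity observation and the case split above yields $g(k+1) \leq g(k)$ for all admissible $k$, and hence $\Delta u_k(\pi)$ is weakly decreasing in $k$ on the region $\pi \leq \alpha/(1+\alpha)$, completing the proof.
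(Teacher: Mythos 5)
Your proposal is correct and follows the paper's route: the paper treats the lemma as an immediate consequence of part (i) of Lemma \ref{lem: Delta_u}, and you simply verify the required monotonicity of $g(k)=(1-p\beta)^k\bigl(1+(n-1)p-\tfrac{kp(1-p)\beta}{1-p\beta}\bigr)$ by an explicit finite-difference computation, whose algebra checks out. A quicker way to see the same fact: since $(1-p)\beta/(1-p\beta)\le 1$, the bracket is at least $1+(n-1-k)p>0$ for $k\le n-1$, so $g(k)$ is a product of two positive decreasing sequences and is therefore decreasing, which is presumably the "immediate" argument the paper has in mind.
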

This follows immediately from part~\ref{item: Deltau_1} of Lemma~\ref{lem: Delta_u}. It falls short of establishing diminishing returns globally, confining the claim to the region where the initial belief is small. Together with Lemmas~\ref{lem: Delta_u} and~\ref{lem: B_A_inequality}, however, it suffices to characterize the regions where full exploitation and full exploration are socially optimal.
\begin{theorem}[Social optimum]
\label{thm: social_optimum}
The socially optimal outcome is full exploitation if and only if $\pi \leq \underline{\pi}^*$, and full exploration if and only if $\pi \geq \bar\pi^*$. Moreover, on $[0,\underline{\pi}^*]$ the social surplus is decreasing in $k$ ($\Delta u_k \leq0$), and on $[\bar\pi^*,1]$ it is increasing in $k$ ($\Delta u_k \geq0$). The cutoff points are:
\begin{gather*}
\label{eq: opt_explt_cutoff}
    \underline{\pi}^*_n= \dfrac{\alpha (1-\delta)}{\beta\left[(1+\alpha)(1-\delta) + \delta(\beta-\alpha(1-\beta))(1 + (n-1)p)\right]}\,,\\
\label{eq: opt_explr_cutoff}    
    \bar\pi^*_n = \frac{\alpha(1-\delta)}{\beta\left[(1+\alpha)(1-\delta)+\delta(\beta-\alpha(1-\beta))\,(1-p\beta)^{n-1}\left(1+(n-1)p-\frac{(n-1)\,p(1-p)\beta}{1-p\beta}\right)\right]}\,.
\end{gather*}
\end{theorem}

Recall that $p=\lambda/n$. Taking $n\to\infty$, the limiting lower and upper cutoff points for the optimality of full exploitation and full exploration are:
\begin{equation*}
    \begin{gathered}
       \underline{\pi}^*_\infty \coloneq \lim_{n \to \infty} \underline{\pi}^*_n = \frac{\alpha(1-\delta)}{\beta\left[(1+\alpha)(1-\delta) + \delta(\beta-\alpha(1-\beta))(1 + \lambda)\right]}\,,\\
       \bar\pi^*_\infty \coloneq \lim_{n \to \infty} \bar\pi^*_n=\frac{\alpha(1-\delta)}{\beta\left[(1+\alpha)(1-\delta)+\delta(\beta-\alpha(1-\beta))\,e^{-\lambda \beta}\left(1+\lambda(1-\beta)\right)\right]}\,.
    \end{gathered}
\end{equation*}
Recall that the equilibrium exploitation threshold in the multi-agent economy coincides with that of the two-player case, namely $\underline{\pi}$ from Proposition~\ref{prop: 2_players_imperfect_comm}. Comparing the socially optimal exploitation threshold $\underline{\pi}^*_\infty$ with $\underline{\pi}$ confirms that the equilibrium exhibits over-exploitation relative to the social optimum. Comparing the socially optimal exploration threshold $\bar{\pi}^*_\infty$ with the asymptotic equilibrium exploration threshold~\eqref{eq: local_exploration_threshold_limit} similarly confirms under-exploration.

\paragraph{Effect of $\lambda$ on the optimal exploration cutoff.} The optimal exploration cutoff $\bar{\pi}^*_\infty$ initially decreases in $\lambda$ and then increases. To understand this non-monotonicity, we examine the marginal contribution of the $n$-th explorer to social surplus (i.e., $\Delta u_{n-1}$), and specifically its effect on the positive externality of community exploration for an agent whose first-period exploration failed. This is captured by the marginal change in the third line of~\eqref{eq: social_welfare_func}, namely
\begin{equation}
\label{eq: n_th_effect}
    \delta \pi(\beta-\alpha(1-\beta))(1-\beta)\left[n\sum_{m=0}^{n-1}q_{n-1}(m)\big(1-(1-\beta)^m\big)-(n-1)\sum_{m=0}^{n-2}q_{n-2}(m)\big(1-(1-\beta)^m\big)\right]\,.
\end{equation}
We employ a coupling argument to further illuminate the marginal change in~\eqref{eq: n_th_effect} and its response to $\lambda$. Suppose that in the high state, an agent who chose the risky arm failed in the first period, which occurs with probability $\pi(1-\beta)$. Let $X\sim\mathsf{Bin}(n-2,\lambda/n)$ denote the number of his second-period contacts excluding himself and the candidate $n$-th agent. Setting aside the base probability $\pi(1-\beta)$ and expected payoff $(\beta-\alpha(1-\beta))$, the bracketed difference in~\eqref{eq: n_th_effect} is approximately
\begin{equation*}
    n\Big(\BE_{X,Z}\left[1-(1-\beta)^{X+Z}\right]-\BE_X\left[1-(1-\beta)^X\right]\Big)\,,
\end{equation*}
where $Z\sim\mathsf{Bernoulli}(\lambda/n)$ represents the exploration outcome of the $n$-th agent. This simplifies to
\begin{equation*}
    n\ \BE_Z\left[1-(1-\beta)^Z\right]\BE_X\left[(1-\beta)^X\right]=n\cdot\frac{\lambda}{n}\beta\cdot\left(1-\lambda\beta/n\right)^{n-2}\to\lambda\beta e^{-\lambda\beta}\,.
\end{equation*}
This expression shows that the positive externality of the $n$-th agent's exploration is proportional to the expected number of successful contacts among her immediate neighbors, $np\beta=\lambda\beta$, and the probability of group failure among the remaining $n-2$ explorers, $(1-\lambda\beta/n)^{n-2}$. For any fixed $\pi\leq\tau$, the marginal impact of the $n$-th agent's exploration ($\Delta u_{n-1}$) is therefore initially increasing in $\lambda$ and then decreasing, which translates into the opposite pattern for the optimal full exploration cutoff. Figure~\ref{fig:lambda} plots the large-$n$ limits of the equilibrium and optimal exploration cutoffs as a function of $\lambda$ in the local economy.
\begin{figure}[htbp]
\centering
\begin{tikzpicture}
\tikzmath{
\Del = 0.25;
\Al = 0.5;
\Be = 0.4;
}
\begin{axis}[smooth,
  axis line style={thick},
  x label style={at={(axis description cs:1,0)},anchor=south},
  yticklabels={},
  ytick=\empty,
  xmax=9,
  axis lines=left,
  xlabel={$\lambda$},
  x label style={below},
  ymin=0.81,
]
\addplot[thick, samples=400, domain=0:8]
{
  (\Al*(1-\Del)) /
  (\Be*((1+\Al)*(1-\Del)
         + \Del*(\Be - \Al*(1-\Be)) * exp(-x*\Be)))
}
[xshift=-10pt]
node[pos=0.3, above] {$\bar{\pi}_\infty$};
\addplot[dashed, thick, samples=400, domain=0:8]
{
  (\Al*(1-\Del)) /
  (\Be*((1+\Al)*(1-\Del)
         + \Del*(\Be - \Al*(1-\Be))
           * exp(-x*\Be) * (1 + x*(1-\Be))))
}
[xshift=10pt]
node[pos=0.3, below] {$\bar{\pi}^*_\infty$};
\end{axis}
\end{tikzpicture}
\captionsetup{width=.5\linewidth}
\caption{Effect of $\lambda$ on equilibrium and optimal exploration thresholds
$[\alpha=0.5,\, \beta=0.4,\, \delta = 0.25]$}
\label{fig:lambda}
\end{figure}

\subsection{Optimal Mixing Probability}
\label{subs:optimal_mixing}
Having characterized the optimal exploitation and exploration thresholds in the previous section, we now turn to the optimal exploration probability $\mu^*$---the mixing probability prescribed by the planner---and compare it to the equilibrium outcome $\mu^e$.

A benevolent planner who randomizes symmetrically across players with probability $\mu$ induces $k\sim\mathsf{Binomial}(n,\mu)$ explorers, yielding the expected social surplus:
\begin{equation*}
S(\pi;\mu)\coloneq\BE_{k \sim\mathsf{Bin}(n,\mu)}\left[u_{k,n}(\pi)\right]\,.
\end{equation*}
Since the expected first-period payoff of exploration is positive for $\pi>\tau$, both the equilibrium and optimal strategy prescribe full exploration there. We therefore restrict the welfare analysis under mixed strategies to $[0,\tau]$. By Lemma~\ref{lem: decreasing_marginals}, the mapping $k\mapsto u_{k,n}(\pi)$ is concave on this region. The following lemma shows that this concavity carries over to $\mu\mapsto S(\pi;\mu)$ and provides a complete characterization of an interior optimum $\mu^*\in(0,1)$; the proof follows by differentiating $S(\pi;\mu)$ with respect to $\mu$ and applying a classical result on Bernstein polynomials \citep{lorentz2012bernstein}, and is therefore omitted.
\begin{lemma}[\cite{lorentz2012bernstein}]
\label{lem:surplus_conc}
The mapping $\mu\mapsto S(\pi;\mu)$ is concave for $\pi\in[0,\tau]$, so the planner's problem $\max\limits_{\mu\in[0,1]}S(\pi;\mu)$ has a unique maximizer. Thus a necessary and sufficient condition for an interior optimum $\mu^*\in(0,1)$ is the first-order condition:
\begin{equation}
\label{eq:foc}
\BE_{k\sim\mathsf{Bin}(n-1,\mu^*)}\left[\Delta u_{k}(\pi)\right]=0\,.
\end{equation}
\end{lemma}
\begin{theorem}[Equilibrium vs. optimum]
\label{thm:optimal_mu}
Assume the local Erd\H{o}s-R\'enyi connections, and $\pi \leq \tau$. Let $q\coloneq 1-p\beta$. Then, the equilibrium $\mu^e$ and optimal $\mu^*$ satisfy, respectively,
\begin{subequations}
\label{eq:optimal_mu}
\begin{align}
\label{eq:equil}
&C_1(\pi)=C_2(\pi)\;\BE_{k\sim\mathsf{Bin}(n-1,\mu^e)}\!\left[q^k\right]\,,\\
\label{eq:optimal}
&C_1(\pi)=C_2(\pi)\;\BE_{k\sim\mathsf{Bin}(n-1,\mu^*)}\!\left[q^k\!\left(1+(n-1)p-\frac{kp(1-p)\beta}{1-p\beta}\right)\right]\,.
\end{align}
\end{subequations}
\end{theorem}
\begin{proof}
The identity for the equilibrium probability $\mu^e$ in~\eqref{eq:equil} is already shown in Proposition~\ref{prop:equil_mix_prob}. The condition for $\mu^*$ in~\eqref{eq:optimal} follows by substituting $\Delta u_k$ from~\eqref{eq: Deltau_1} into the first-order condition~\eqref{eq:foc}.
\end{proof}
The term $\left(1+(n-1)p-\frac{kp(1-p)\beta}{1-p\beta}\right)$ in~\eqref{eq:optimal} captures the positive externality of an additional explorer. For $\pi\leq\tau$, the equilibrium condition~\eqref{eq:equil} reflects that agents explore only when the informational gain ($C_2$) outweighs the first-period loss ($C_1$), but they ignore the information they provide to others---giving rise to the wedge between $\mu^e$ and $\mu^*$. The following corollary shows that the equilibrium mixing probability is always below the social optimum, confirming that under-exploration is robust to mixed strategies. 
\begin{corollary}
For every $\pi\in[0,\tau)$, $\mu^*\geq\mu^e$.
\end{corollary}
\begin{proof}
Define the externality factor $\Gamma(k)\coloneq 1+(n-1)p-\frac{kp(1-p)\beta}{1-p\beta}$, representing the marginal social value of an agent's information. Since $k\leq n-1$, we have $\Gamma(k)\geq 1$. Let
\begin{equation*}
    G(\pi;\mu)\coloneq\BE_{k\sim\mathsf{Bin}(n-1,\mu)}\!\left[q^k\right]\quad\text{and}\quad H(\pi;\mu)\coloneq\BE_{k\sim\mathsf{Bin}(n-1,\mu)}\!\left[q^k\Gamma(k)\right].
\end{equation*}
Since $\Gamma(k)\geq 1$, we have $G(\pi;\mu)\leq H(\pi;\mu)$ for all $\mu$. Moreover, since $q<1$, both $q^k$ and $q^k\Gamma(k)$ are decreasing in $k$, so by first-order stochastic dominance, both $G(\pi;\cdot)$ and $H(\pi;\cdot)$ are decreasing in $\mu$. The equilibrium and optimality conditions give $C_1(\pi)/C_2(\pi)=G(\pi;\mu^e)=H(\pi;\mu^*)$, from which $\mu^*\geq\mu^e$ follows.
\end{proof}


\section{Conclusion and Additional Discussion}
\label{sec:conclusion}

The tension between information diffusion and production in organizations and societies represents a complex and evolving challenge. On one hand, greater connectivity has allowed for unprecedented access to knowledge and ideas fostering informed decision-making and quicker adoption of innovation. At the same time, this proliferation of information can undermine knowledge production. In a better connected organization or society, individuals are tempted to rely on knowledge that is shared through the network instead of experimenting with new ideas. Greater connectivity and knowledge diffusion can thus lead to lower knowledge production, reducing overall social welfare.

Our analysis begins with a two-player economy, where equilibrium reveals three distinct regions based on initial beliefs, each reflecting different degrees of exploitation and exploration. free-riding leads to over-exploitation and under-exploration relative to the social optimum, and equilibrium social surplus behaves non-monotonically in the connection probability. Our model is set in two periods; we expect the non-monotonicity of equilibrium social surplus to persist in longer horizons—as studied in continuous time by~\cite{board2024experimentation}—and that as agents become more patient ($\delta \to 1$), the exploration threshold decreases, since it becomes preferable to bear the exploration cost sooner rather than later.

We then analyze multi-agent economies and explore different network structures. With local connections each agent only observes the experimentation outcomes of
her immediate neighbors, whereas with global connections each agent's observable circle includes the entire set of agents who are (directly or indirectly) connected to her. In both structures, the tension between information sharing and private exploration remains a significant factor. In particular, higher connectivity can exacerbate free-riding and reduce social welfare. Additionally, we discuss the asymptotic effects of connectivity on equilibrium and highlight how the size of the connected component in the network significantly impacts exploration behavior.

Although we showed that the tension between information diffusion and production is prevalent in different settings, interesting variations remain to be studied. For example, what happens when agents have different preferences or endowments?  What happens when societies and organizations have network structures with differently connected agents? We hope to address these issues in future work.


\newpage
\appendix
\addtocontents{toc}{\protect\setcounter{tocdepth}{1}}
\section{Proof of Results in Section~\ref{sec: many_players_Econ}}
\subsection{Proof of Theorem~\ref{thm: exploration_equil}}
Suppose all but one individual explore in period one. Let $L$ be the number of successes ($y=1$) the focal agent observes; clearly $L\le M$ (her period-two contacts). Define $\pi_{\ell,m}\coloneq\BP\left(\theta=1\mid L=\ell, M=m\right)$. Then $\pi_{\ell,m}=1$ for $\ell\ge1$, and if $\ell=0$:
\begin{equation*}
    \frac{\pi_{0,m}}{1-\pi_{0,m}} = \frac{\pi}{1-\pi}(1-\beta)^m\,.
\end{equation*}
In period two, the focal agent pulls the risky arm iff $\pi_{\ell,m}>\tau$, yielding
\begin{equation*}
\left[\pi_{\ell,m}\beta-\alpha\bigl(1-\pi_{\ell,m}\beta\bigr)\right]^+.
\end{equation*}
Equivalently,
\begin{equation*}
\left[\pi_{\ell,m}\beta-\alpha(1-\pi_{\ell,m}\beta)\right]^+=\BE\left[\theta\beta-\alpha(1-\theta\beta)\mid L=\ell, M=m\right]^+\,.
\end{equation*}
If all others explore in period one, the focal agent's exploitation payoff is
\begin{equation}
\label{eq: all_w}
\begin{aligned}
     w_{n-1}(\pi) &= \delta \sum_{m=0}^{n-1}\sum_{\ell=0}^m \BP\left(L=\ell,M=m\right)\BE\left[\theta\beta-\alpha(1-\theta\beta) \big| L=\ell,M=m\right]^+\\
     &= \delta \sum_{m=0}^{n-1}\sum_{\ell=0}^m\BE\left[\theta\beta-\alpha(1-\theta\beta);\, L=\ell, M=m\right]^+\,.
\end{aligned}
\end{equation}
Let $q(m)\coloneq\BP(M=m)$ be the probability that a randomly selected agent observes the exploration outcomes of $m$ others. Then
\begin{equation*}
    \begin{aligned}
        w_{n-1}(\pi) &= \delta \sum_{m=0}^{n-1}\sum_{\ell=0}^m q(m)
        \left[\pi (\beta-\alpha(1-\beta)) \binom{m}{\ell} \beta^\ell (1-\beta)^{m-\ell}-\alpha(1-\pi)1_{\{\ell=0\}}\right]^+\\
        &= \delta \sum_{m=0}^{n-1} \underbrace{q(m)\left[\pi (\beta-\alpha(1-\beta)) (1-\beta)^m-\alpha(1-\pi)\right]^+}_{\BE\left[\theta\beta-\alpha(1-\theta\beta);\, M=m, L=0\right]^+}
        + \underbrace{\delta \pi(\beta-\alpha(1-\beta)) \sum_{m=0}^{n-1}q(m)\left(1-(1-\beta)^m\right)}_{\delta \BE\left[\theta\beta-\alpha(1-\theta\beta);\, L>0\right]}\,.
    \end{aligned}
\end{equation*}
Thus $w_{n-1}(\pi)$ splits into the payoff when $L=0$ and when $L>0$.

If the agent explores in period one, let $y_0\in\{-\alpha,1\}$ denote her realization of the risky arm. Her expected payoff is
\begin{equation}
\label{eq: all_v}
    \begin{gathered}
        v_n(\pi) = (1-\delta)\left(\pi \beta-\alpha(1-\pi \beta)\right)\\
        +\delta \sum_{m=0}^{n-1}\sum_{\ell=0}^m\sum_{y\in\{-\alpha,1\}} \BP\left(M=m,L=\ell,y_0=y\right)\BE\left[\theta\beta-\alpha(1-\theta\beta)\mid M=m,L=\ell,y_0=y\right]^+\,.
    \end{gathered}
\end{equation}
The discounted period-2 term splits by the period-1 realization $y_0$:
\begin{equation*}
    \begin{aligned}
        \text{discounted expected payoff} 
        &=\delta \sum_{m,\ell}\BE\left[\theta\beta-\alpha(1-\theta\beta);\, M=m, L=\ell, y_0=1\right]^+\\
        &\quad+\delta \sum_{m,\ell}\BE\left[\theta\beta-\alpha(1-\theta\beta);\, M=m, L=\ell, y_0=-\alpha\right]^+\,.
    \end{aligned}
\end{equation*}
The first (second) term conditions on $y_0=1$ ($y_0=-\alpha$). Further rearrangements imply that the sum above is equal to
\begin{equation*}
    \begin{gathered}
        \delta \pi (\beta-\alpha(1-\beta))\beta
        + \delta \sum_{m,\ell} q(m)
        \left[\pi (\beta-\alpha(1-\beta)) \binom{m}{\ell}\beta^\ell (1-\beta)^{m+1-\ell}-\alpha(1-\pi)1_{\{\ell=0\}}\right]^+\\
        = \underbrace{\delta \pi (\beta-\alpha(1-\beta))\beta}_{\delta \BE\left[\theta\beta-\alpha(1-\theta\beta);\, y_0=1\right]}
        + \delta \sum_{m=0}^{n-1} \underbrace{q(m)\left[\pi (\beta-\alpha(1-\beta))(1-\beta)^{m+1}-\alpha(1-\pi)\right]^+}_{\BE\left[\theta\beta-\alpha(1-\theta\beta);\, M=m, L=0, y_0=-\alpha\right]^+}\\
        \quad+ \underbrace{\delta \pi (\beta-\alpha(1-\beta))(1-\beta)\sum_{m=0}^{n-1}q(m)\left(1-(1-\beta)^m\right)}_{\delta \BE\left[\theta\beta-\alpha(1-\theta\beta);\, L>0, y_0=-\alpha\right]}\,.
    \end{gathered}
\end{equation*}

Formally, exploration occurs iff $v_n(\pi)>w_{n-1}(\pi)$, i.e., when the discounted gains from conclusive signals ($L>0$ or $y_0=1$) outweigh the opportunity cost when signals are inconclusive ($L=0$ and $y_0=-\alpha$):
\begin{small}
\begin{equation}
\label{eq: n_player_exploration_cond}
    \begin{gathered}
        \overbrace{(1-\delta)\left(\pi\beta -\alpha(1-\pi \beta)\right)}^{\text{present payoff}}\\
        +\delta \Big(\BE\left[\theta \beta-\alpha(1-\theta \beta);y_0=1\right]
        +\BE\left[\theta \beta-\alpha(1-\theta \beta);L>0,y_0=-\alpha\right]
        -\BE\left[\theta \beta-\alpha(1-\theta \beta);L>0\right]\Big)\\
        > \delta\sum_{m=0}^{n-1} \Big(\BE\left[\theta\beta-\alpha(1-\theta \beta);M=m,L=0\right]^+-\BE\left[\theta \beta-\alpha(1-\theta \beta);M=m,L=0,y_0=-\alpha\right]^+\Big)\\
        =\text{discounted opportunity cost of exploration absent conclusive signals}.
    \end{gathered}
\end{equation}
\end{small}
If $\pi>\tau$, the agent explores in period one: the expected first period's payoff is positive and exploration yields informational benefits, so $v_n(\pi)>w_{n-1}(\pi)$. Hence the cutoff satisfies $\bar\pi\le\tau$, and we restrict attention to $\pi\le\tau$. On $\pi\le\tau$, all $[\cdot]^+$ terms equal zero because, for any $k\ge 0$,
\begin{equation*}
    \pi\left(\beta-\alpha(1-\beta)\right)(1-\beta)^k \leq \pi\left(\beta-\alpha(1-\beta)\right) \leq \alpha(1-\pi)\,,
\end{equation*}
where the last inequality uses $\pi\le\tau$. Thus $v_n(\pi)>w_{n-1}(\pi)$ reduces to
\begin{equation}
\label{eq: exploration_equil_cond_expanded}
\begin{gathered}
    (1-\delta)\left(\pi \beta-\alpha(1-\pi \beta)\right)+\delta \pi \beta (\beta -\alpha(1-\beta))\BE\left[(1-\beta)^M\right] > 0\,.
\end{gathered}
\end{equation}
This is equivalent to $\pi>\bar\pi$, proving~\eqref{eq: n_player_pi_bar}.\qed


\subsection{Proof of Lemma~\ref{lem: no_exploiting_large_pi}}
Let $q_k(m)\coloneq\BP_k(M=m)$ be the probability of observing $m$ exploration outcomes among the $k$ first-period explorers. Following the recipe of~\eqref{eq: all_w} and~\eqref{eq: all_v}, the payoff functions are
\begin{equation}
\label{eq: general_payoff_functions}
    \begin{gathered}
       w_k(\pi) = \delta \BE_k\left[ \left(\pi (\beta-\alpha(1-\beta))(1-\beta)^M-\alpha(1-\pi)\right)^+\right]+\delta \pi (\beta-\alpha(1-\beta))\BE_k\left[1-(1-\beta)^M\right]\,,\\
       v_k(\pi) = (1-\delta)\big(\pi \beta-\alpha(1-\pi \beta)\big)+\delta \pi (\beta-\alpha(1-\beta)) \beta \\
       +\delta \BE_{k-1}\left[\left(\pi (\beta-\alpha(1-\beta))(1-\beta)^{M+1}-\alpha(1-\pi)\right)^+\right]\\
       +\delta \pi (\beta-\alpha(1-\beta))(1-\beta)\BE_{k-1}\left[1-(1-\beta)^M\right]\,.
    \end{gathered}
\end{equation}
Throughout, the distribution of $M$ is the one indexed by the outer expectation: $\BE_k$ means $\BP(M=m)=q_k(m)$.

Consider the difference
\begin{equation*}
    \begin{gathered}
       v_{k+1}(\pi)-w_k(\pi) = (1-\delta)\big(\pi\beta-\alpha(1-\pi \beta)\big)+\delta \pi \left(\beta -\alpha (1-\beta)\right) \beta \BE_k\left[(1-\beta)^M\right]\\
       +\delta \BE_k\left[\big(\pi \left(\beta -\alpha (1-\beta)\right)(1-\beta)^{M+1}-\alpha(1-\pi)\big)^+\right]-\delta \BE_k\left[\big(\pi \left(\beta -\alpha (1-\beta)\right)(1-\beta)^M-\alpha(1-\pi)\big)^+\right].
    \end{gathered}
\end{equation*}
Since $\pi>\tau$, both $[\cdot]^+$ terms are zero, so
\begin{equation*}
    v_{k+1}(\pi)-w_k(\pi) = (1-\delta)\big(\pi\beta-\alpha(1-\pi \beta)\big)+\delta \pi \left(\beta -\alpha (1-\beta)\right) \beta \BE_k\left[(1-\beta)^M\right] > 0\,.
\end{equation*}\qed
\subsection{Proof of Theorem~\ref{thm: symmetric_mixed_equil}}
To prove the theorem, we first demonstrate that on the intermediate region there exists a unique $\mu$ satisfying $v(\pi;\mu)=w(\pi;\mu)$. For existence, observe that on this region $v(\pi;0) = v_1(\pi)> w_0(\pi) = w(\pi;0)$ and $v(\pi;1)=v_n(\pi)\leq w_{n-1}(\pi)=w(\pi;1)$. Therefore, there exists at least one $\mu^e \in (0,1]$ satisfying $v(\pi; \mu^e) = w(\pi; \mu^e)$.

For uniqueness, observe that the $[\cdot]^+$ terms in equation~\eqref{eq: general_payoff_functions} vanish on the intermediate region, yielding:
\begin{equation*}
    \begin{gathered}
       w(\pi;\mu)=\delta \pi (\beta -\alpha(1-\beta)) \sum_{k=0}^{n-1}\binom{n-1}{k}\mu^k (1-\mu)^{n-1-k}\BE_k\left[1-(1-\beta)^M\right]\,,\\
       v(\pi;\mu) = (1-\delta)\left(\pi \beta-\alpha(1-\pi \beta)\right)+\delta \pi(\beta -\alpha (1-\beta)) \beta\\
      +\delta \pi (\beta-\alpha(1-\beta)) (1-\beta) \sum_{k=0}^{n-1} \binom{n-1}{k}\mu^k(1-\mu)^{n-1-k}\BE_k\left[1-(1-\beta)^M\right]\,.
    \end{gathered}
\end{equation*}
Hence $v(\pi;\mu)=w(\pi;\mu)$ is equivalent to
\begin{equation}
\label{eq: mu_cond}
\begin{gathered}
    \frac{(1-\delta)\left(\alpha(1-\pi \beta) -\pi \beta)\right)}{\delta \pi  (\beta -\alpha (1-\beta)) \beta} = \sum_{k=0}^{n-1}\binom{n-1}{k}\mu^k (1-\mu)^{n-1-k}\, \BE_k\left[(1-\beta)^M\right]\\
    = \BE_{k \sim \textsf{Bin}(n-1,\mu)}\left[\BE_k\left[(1-\beta)^M\right]\right]\,.
\end{gathered}
\end{equation}
Since the binomial distribution $\textsf{Bin}(n-1,\mu)$ is increasing in the sense of FOSD with respect to $\mu$, and the mapping $k \mapsto \BE_k\left[(1-\beta)^M\right]$ is decreasing in $k$ (as can be shown by a simple coupling argument), the RHS of~\eqref{eq: mu_cond} is decreasing in $\mu$. This establishes the uniqueness of $\mu^e$. 

In addition, the LHS of~\eqref{eq: mu_cond} is decreasing in $\pi$. Given that its RHS is decreasing in $\mu$, the equilibrium mixing probability $\mu^e$ must be increasing in $\pi$.

Lastly, observe that for a fixed $\mu$, one has $\textsf{Bin}(n,\mu) \succeq_1 \textsf{Bin}(n-1,\mu)$. Because the mapping $k \mapsto \BE_k\left[(1-\beta)^M\right]$ is decreasing in $k$, the RHS of~\eqref{eq: mu_cond} is decreasing in both $\mu$ and $n$. Since the LHS of~\eqref{eq: mu_cond} is unaffected by $n$, $\mu^e$ must be decreasing in $n$.\qed
\section{Proof of Results in Section~\ref{sec: limit}}
\subsection{Proof of Lemma~\ref{lem: monotonicity_local_explr_threshold}}
To show that $\bar{\pi}_{n+1} > \bar{\pi}_{n}$, we use equation~\eqref{eq: n_player_pi_bar} along with the notation in footnote~\ref{foot: exp_notation} to establish that $$\BE^{(n)}_{n-1}\left[(1-\beta)^M\right]> \BE^{(n+1)}_{n}\left[(1-\beta)^M\right].$$ This holds because $\BE^{(n)}_{n-1}\left[(1-\beta)^M\right] = \left(1-\frac{\lambda \beta}{n}\right)^{n-1}$, which is eventually decreasing in $n$ (since $x \mapsto (x-1)\log(1-\lambda \beta/x)$ has negative derivative for large $x$).\qed

\subsection{Proof of Lemma~\ref{lem: ordering}}
Since $p_n=\frac{\lambda}{n}\geq \frac{\lambda}{n+1}=p_{n+1}$, a coupling argument shows that on the same probability space $M_k^{(n)} \geq M_k^{(n+1)}$ almost surely, and therefore $\BE\left[(1-\beta)^{M^{(n)}_k}\right]\leq \BE\left[(1-\beta)^{M^{(n+1)}_{k}}\right]$. This in turn implies $\pi_{k,n+1} \leq \pi_{k,n}$.

Next, observe that with local connections,
\begin{equation*}
\begin{gathered}
   \BE^{(n+1)}_{k+1}\left[(1-\beta)^M\right] = \left(1-\frac{\lambda \beta}{n+1}\right)^{k+1} \leq\left(1-\frac{\lambda \beta}{n+1}\right)\left(1+\frac{\lambda \beta}{n+1}\right)^{-k}\\
   \leq\left(1-\frac{\lambda \beta}{n+1}\right)\left(1+\frac{k\lambda \beta}{n+1}\right)^{-1}.
\end{gathered}
\end{equation*}
Moreover,
\begin{equation*}
   \BE^{(n)}_{k}\left[(1-\beta)^M\right] =\left(1-\frac{\lambda \beta}{n}\right)^{k} \geq 1-\frac{k \lambda \beta}{n}\,.
\end{equation*}
For large $n$, one can readily verify that
\begin{equation*}
    \left(1-\frac{\lambda \beta}{n+1}\right) \leq \left(1-\frac{k \lambda \beta}{n}\right)\left(1+\frac{k\lambda \beta}{n+1}\right)\,.
\end{equation*}
Therefore,
\begin{equation*}
    \BE^{(n+1)}_{k+1}\left[(1-\beta)^M\right] \leq \BE^{(n)}_{k}\left[(1-\beta)^M\right]\,,
\end{equation*}
which implies $\pi_{k+1,n+1}\geq \pi_{k,n}$ for sufficiently large $n$. A similar argument shows $\pi_{k,n+1} \geq \pi_{k-1,n}$, concluding the proof.\qed
\subsection{Proof of Proposition~\ref{prop: equil_num_explr_agents}}
For every $\pi \leq \underline{\pi}$, the full exploitation equilibrium prevails, thus $k_n(\pi)=0$. Also, for every $\pi\geq \bar{\pi}_\infty^{\text{local}}$, due to Lemma~\ref{lem: monotonicity_local_explr_threshold}, it follows that  $\pi> \bar{\pi}_n$ for large enough $n$, hence $k_n(\pi)=n$. Therefore, it remains to examine the limiting behavior of $k_n(\pi)/n$ on the intermediate region $(\underline{\pi},\bar{\pi}_\infty^{\text{local}})$, where asymmetric equilibria prevail. By Theorem~\ref{thm: assymetric_equil}, there will be $k_n$ explorers in the equilibrium if and only if $\pi_{k_n-1, n} < \pi \leq \pi_{k_n, n}$. According to equation~\eqref{eq: asymmetric_equil_cond}, this condition translates to
\begin{equation*}
\begin{gathered}
   \BE_{k_n}^{(n)}\left[(1-\beta)^M\right] \leq \frac{(1-\delta)\left(\alpha(1-\pi\beta) -\pi \beta \right)}{\delta \pi\beta (\beta -\alpha(1-\beta))} < \BE_{k_n-1}^{(n)}\left[(1-\beta)^M\right] \\
    \Leftrightarrow (k_n-1) < \frac{\log \left(\frac{(1-\delta)\left(\alpha(1-\pi\beta) -\pi \beta \right)}{\delta \pi\beta (\beta -\alpha(1-\beta))}\right)}{\log\left(1-\lambda \beta/n\right)}\leq k_n\,.
\end{gathered}
\end{equation*}
Therefore,
\begin{equation*}
    \lim_{n \to \infty} \frac{k_n}{n} = \frac{\log \left(\frac{(1-\delta)\left(\alpha(1-\pi\beta) -\pi \beta \right)}{\delta \pi\beta (\beta -\alpha(1-\beta))}\right)}{\lim_{n \to \infty}n \log\left(1-\lambda \beta/n\right)}=\frac{1}{\lambda \beta}\log\frac{\delta \pi\beta (\beta -\alpha(1-\beta))}{(1-\delta)\left(\alpha(1-\pi\beta)-\pi\beta\right)} = \frac{1}{\lambda \beta} \log \frac{C_2(\pi)}{C_1(\pi)}\,.
\end{equation*}\qed

\subsection{Proof of Proposition~\ref{prop:equil_mix_prob}}
From~\eqref{eq: mu_cond},
\begin{equation*}
    C_1(\pi) = C_2(\pi)\;\BE_{k\sim\mathsf{Bin}(n-1,\mu)}\!\left[\BE_k\left[(1-\beta)^M\right]\right].
\end{equation*}
In the local ER graph with $k$ explorers, the number contacts $M\sim\mathsf{Bin}(k,p)$ where $p=\lambda/n$, so the inner expectation equals $(1-\beta p)^k$. Therefore,
\begin{equation*}
    C_1(\pi) = C_2(\pi)\;\BE_{k\sim\mathsf{Bin}(n-1,\mu)}\!\left[(1-\beta p)^k\right] = C_2(\pi)(1-\mu\beta p)^{n-1}.
\end{equation*}
Taking logarithms of both sides yields~\eqref{eq:equil_mix_prob}. 

To find the limiting $\mu^e_n$, set $R(\pi)\coloneq C_1(\pi)/C_2(\pi)\in\mathbb{R}_+$. As $n\to\infty$,
\begin{equation*}
R(\pi)^{1/(n-1)}=\ee^{\frac{1}{n-1}\log R(\pi)}\approx 1+\frac{\log D(\pi)}{n-1}\,,
\end{equation*}
and therefore
\begin{equation*}
    \lim_{n\to\infty}\mu^e_n=\lim_{n\to\infty}\frac{n}{\lambda\beta}\left(1-1-\frac{1}{n-1}\log\frac{C_1(\pi)}{C_2(\pi)}\right)=\frac{1}{\lambda\beta}\log\frac{C_2(\pi)}{C_1(\pi)}\,,
\end{equation*}
which justifies~\eqref{eq:limit_equil_mix_prob}.\qed

\subsection{Proof of Proposition~\ref{prop: asymptotic_pi_bar}}
We only present the proof in the subcritical regime ($\lambda \leq 1$ and thus $\zeta(\lambda)=1$), showing that the asymptotic distribution of the size of a typical connected component is $B_1(\lambda) \stackrel{d}{=}\textsf{Borel}(\lambda)$. The convergence of $|\mathcal{C}|$ to the mixture Borel distribution in the supercritical regime follows by simple conditioning on the extinction event. Henceforth, we assume $\lambda \leq 1$.

First, we show how the size of the connected component $|\mathcal{C}|$ in an ER random graph with parameters $(n,p=\lambda/n)$ can be approximated by the total number of descendants in a branching process with $\textsf{Bin}(n,p)$ offspring distribution, which we denote by $B$. We use $\BP_{n,p}$ to refer to the distribution of $B$. Theorems~4.2 and~4.3 of \cite{van2016random} jointly state that:
\begin{equation*}
    \BP_{n-k,p}\left(B \geq k\right) \leq \BP\left(|\mathcal{C}| \geq k\right) \leq \BP_{n,p}\left(B \geq k\right)\,.
\end{equation*}
Next, we examine how the total number of progenies in a binomial branching process with parameters $(n,p)$ can be approximated by a branching process with $\textsf{Poisson}(np)$ offspring distribution, which we denote by $T$. We use $\BP_\lambda$ to denote the distribution of the branching process with $\textsf{Poisson}(\lambda)$ offspring distribution. Let $\lambda = np$ and fix $k \in \BN$. Then, Theorem~3.20 in \cite{van2016random} implies that
\begin{equation*}
    \big|\BP_{n,p}(B \geq k) - \BP_\lambda(T \geq k)\big|\leq  \frac{\lambda^2 k}{n}\,.
\end{equation*}
Combining the last two relations yields
\begin{equation*}
    \BP_{\lambda(1-k n^{-1})} (T\geq k) -\frac{\lambda^2(n-k) k}{n^2} \leq \BP\left(|\mathcal{C}| \geq k\right) \leq \BP_\lambda(T\geq k)+\frac{\lambda^2 k}{n}\,.
\end{equation*}
For a fixed $k \in \BN$, let $\lambda_n \coloneq  \lambda(1-k n^{-1})$. Then, using the method of characteristic functions, one can show $\BP_{\lambda_n}$ weakly converges to $\BP_\lambda$ as $n\to \infty$ (Theorem~6.3 in~\cite{kallenberg2021foundations}). This in turn implies $\BP_{\lambda_n}(T<k) \to \BP_{\lambda}(T<k)$, and hence $\BP_{\lambda_n}(T\geq k) \to \BP_{\lambda}(T\geq k)$. Combining this with the above inequality, we conclude that for every $k \in \BN$,
\begin{equation}
\label{eq: tail_conv}
    \lim_{n \to \infty}\BP(|\mathcal{C}| \geq k) = \BP_\lambda(T\geq k)\,.
\end{equation}
Let $\bar{\BN} = \BN \cup \{\infty\}$; then $|\mathcal{C}|$ and $T$ are $\bar{\BN}$-valued random variables, where $\bar{\BN}$ is a discrete metric space (so every function on it is continuous). The limiting result in~\eqref{eq: tail_conv} therefore implies weak convergence of $|\mathcal{C}|$ to $T$. The distribution of the total progeny in a Poisson branching process is known to follow the Borel distribution $\textsf{Borel}_1(\lambda)$ (see Theorem~3.16 of \cite{van2016random}), concluding the 
of part~\ref{item: weak_conv}. 

A similar argument combined with conditioning on the extinction event in the supercritical regime implies that $|\mathcal{C}|$ converges in distribution to $\textsf{Borel}_{\zeta(\lambda)}(\lambda)$ for $\lambda > 1$. Given our definition of $\zeta(\lambda)\equiv 1$ for $\lambda \leq 1$, we obtain convergence across all $\lambda \geq 0$, proving part~\ref{item: weak_conv}.

Part~\ref{item: asymp_pi_bar} follows immediately since every function on $\bar{\BN}$ is continuous. In particular, $x \mapsto (1-\beta)^{x-1}$ is bounded and continuous, so by the weak convergence established above,
\begin{equation*}
    \lim_{n \to \infty}\BE\left[(1-\beta)^{|\mathcal{C}|-1}\right] = \BE\left[(1-\beta)^{B_{\zeta(\lambda)} (\lambda)-1}\right]\,,
\end{equation*}
justifying equation~\eqref{eq: asymp_pi_bar}.\qed
\subsection{Proof of Lemma~\ref{lem: mixture_borel_PGF}}
We consider the subcritical and supercritical regimes separately.

\medskip
\noindent\textbf{Subcritical regime ($\lambda \leq 1$).}
In this case $\zeta(\lambda) = 1$, so $B_{\zeta(\lambda)}(\lambda) = B(\lambda)$ and $\psi_\lambda(z) = \BE\left[z^{B(\lambda)}\right]$. Since the total progeny of a branching process satisfies the distributional recursion $B(\lambda) = 1 + \sum_{i=1}^{N} B_i(\lambda)$, where $N \sim \textsf{Poisson}(\lambda)$ and the $B_i(\lambda)$ are i.i.d.\ copies of $B(\lambda)$ independent of $N$, we have
\begin{align*}
    \psi_\lambda(z) &= \BE\left[z^{1 + \sum_{i=1}^{N} B_i(\lambda)}\right] = z \, \BE\left[\prod_{i=1}^{N} z^{B_i(\lambda)}\right] = z \, \BE\left[\psi_\lambda(z)^N\right] = z \, \mathrm{e}^{\lambda(\psi_\lambda(z)-1)},
\end{align*}
where the last step uses the probability-generating function of the Poisson distribution.

\medskip
\noindent\textbf{Supercritical regime ($\lambda > 1$).}
By definition of the mixture Borel distribution, with probability $1-\zeta(\lambda)$ the component size is infinite (contributing $z^\infty = 0$ for $z \in [0,1)$), and with probability $\zeta(\lambda)$ the size follows $\textsf{Borel}(\lambda\zeta(\lambda))$. Therefore,
\begin{equation*}
    \psi_\lambda(z) = \zeta(\lambda) \cdot \BE\left[z^{B(\lambda\zeta(\lambda))}\right].
\end{equation*}
We claim that $\lambda\zeta(\lambda) < 1$, so that $B(\lambda\zeta(\lambda))$ has the standard (subcritical) $\textsf{Borel}(\lambda\zeta(\lambda))$ distribution. To see this, consider the convex function $f(x) = \mathrm{e}^{-\lambda(1-x)}$ on $[0,1]$. The fixed-point relation~\eqref{eq: extinct_prob} states that $\zeta(\lambda) = f(\zeta(\lambda))$. In the supercritical regime, $f$ has two fixed points in $[0,1]$: the smaller one is $\zeta(\lambda)$ and the larger one is $1$. Since $f$ is convex and $\zeta(\lambda) < 1$, the graph of $f$ must cross the diagonal from above at $x = \zeta(\lambda)$, which means $f'(\zeta(\lambda)) < 1$. Computing $f'(\zeta(\lambda)) = \lambda \mathrm{e}^{-\lambda(1-\zeta(\lambda))} = \lambda\zeta(\lambda)$, we conclude that $\lambda\zeta(\lambda) < 1$.

Now, by the subcritical case, $g(z) \coloneq \BE\left[z^{B(\lambda\zeta(\lambda))}\right]$ satisfies
\begin{equation*}
    g(z) = z \, \mathrm{e}^{\lambda\zeta(\lambda)(g(z)-1)}.
\end{equation*}
Substituting $g(z) = \psi_\lambda(z)/\zeta(\lambda)$ gives
\begin{equation*}
    \frac{\psi_\lambda(z)}{\zeta(\lambda)} = z \, \mathrm{e}^{\lambda\zeta(\lambda)\left(\frac{\psi_\lambda(z)}{\zeta(\lambda)}-1\right)} = z \, \mathrm{e}^{\lambda(\psi_\lambda(z)-\zeta(\lambda))}.
\end{equation*}
Multiplying both sides by $\zeta(\lambda)$ and using $\zeta(\lambda) = \mathrm{e}^{-\lambda(1-\zeta(\lambda))}$ from~\eqref{eq: extinct_prob}, we obtain
\begin{equation*}
    \psi_\lambda(z) = z \, \zeta(\lambda) \, \mathrm{e}^{\lambda(\psi_\lambda(z)-\zeta(\lambda))} = z \, \mathrm{e}^{\lambda(\psi_\lambda(z)-1)}.
\end{equation*}
Finally, at $z=1$ the identity $\psi_\lambda(1) = \zeta(\lambda)$ is consistent with~\eqref{eq: mixture_borel_PGF}, since $\zeta(\lambda) = 1 \cdot \mathrm{e}^{\lambda(\zeta(\lambda)-1)}$ is precisely the fixed-point relation~\eqref{eq: extinct_prob}.\qed

\subsection{Proof of Theorem~\ref{thm:scaling}}
To prove the theorem, we need the following perturbative result:
\begin{lemma}
\label{lem:perturbative}
Let $z=1-\beta$, and recall that the PGF of the mixture Borel distribution satisfies~\eqref{eq: mixture_borel_PGF}. Then as $\beta\to 0$, for every $x\in\mathbb{R}$:
\begin{equation}
\label{eq:perturbative}
\psi_{1+x\sqrt\beta}(1-\beta)=1-c(x)\sqrt\beta+o(\sqrt\beta)\,,
\end{equation}
where $c(x)\coloneq x+\sqrt{x^2+2}$ is the crossover function.
\end{lemma}
\begin{proof}
Fix $x\in \BR$ and set $\lambda_\beta \coloneq 1+x\sqrt\beta$, $z \coloneq 1-\beta$. Let $\psi_\beta \coloneq \psi_{\lambda_\beta}(z)\in(0,1]$, and $\ve_\beta \coloneq 1-\psi_\beta\in[0,1)$. The fixed point $\psi=z e^{\lambda(\psi-1)}$ becomes
\begin{equation*}
1-\ve_\beta=(1-\beta)\exp(-\lambda_\beta\ve_\beta)    
\end{equation*}
Taking logs and using $\log(1-y)=-(y+y^2/2)+O(y^3)$ yields
\[
(1-\lambda_\beta)\ve_\beta+\frac{\ve_\beta^2}{2}
=\beta+O(\beta^2+\ve_\beta^3).
\]
Since $\lambda_\beta=1+O(\sqrt\beta)$, the left-hand side is larger than $\ve_\beta^2/4$ for $\beta$ small, hence
$\varepsilon_\beta=O(\sqrt\beta)$. Write $\varepsilon_\beta=c_\beta\sqrt\beta$ and substitute $1-\lambda_\beta=-x\sqrt\beta$:
\[
-xc_\beta+\frac{c_\beta^2}{2}=1+O(\sqrt\beta).
\]
Thus $c_\beta\to c$, where $c$ solves $c^2-2xc-2=0$. Since $\varepsilon_\beta\ge0$, we take the positive root $c(x)=x+\sqrt{x^2+2}$, which justifies the definition of the crossover function.
Therefore $\varepsilon_\beta=c(x)\sqrt\beta+o(\sqrt\beta)$, i.e.
\[
\psi_{1+x\sqrt\beta}(1-\beta)=1-c(x)\sqrt\beta+o(\sqrt\beta),
\]
as claimed.
\end{proof}
\paragraph{Proof of Theorem~\ref{thm:scaling}.} Since $\tau(\alpha,\beta)\to\tau_0\in(0,1)$ as $\beta\to 0$, the parameter $\alpha$ is implicitly a function of $\beta$ along this limit. We may then write~\eqref{eq: asymp_pi_bar} as
\begin{equation*}
    \bar{\pi}_\infty^{\mathrm{global}}(\lambda,\beta)=\frac{A_\beta}{\beta\big(a_\beta+b_\beta\varphi_\beta(\lambda)\big)}\,,
\end{equation*}
where
\begin{equation*}
\begin{gathered}
    A_\beta \coloneq \alpha(1-\delta),\quad a_\beta \coloneq (1+\alpha)(1-\delta),\quad b_\beta  \coloneq \delta(\beta-\alpha(1-\beta))\,, \\
    \varphi_\lambda(z)\coloneq \BE\left[z^{B_{\zeta(\lambda)}(\lambda)-1}\right] = \psi_\lambda(z)/z\,.
\end{gathered}
\end{equation*}
Hence
\begin{equation}
\label{eq: pi_diff}
\bar{\pi}_\infty^{\mathrm{global}}(\lambda,\beta)-\bar{\pi}_\infty^{\mathrm{global}}(1,\beta)=-\frac{A_\beta}{\beta}\cdot
\frac{b_\beta\big(\varphi_\lambda(1-\beta)-\varphi_1(1-\beta)\big)}
{\big(a_\beta+b_\beta\varphi_\lambda(1-\beta)\big)\big(a_\beta+b_\beta\varphi_1(1-\beta)\big)}\,.
\end{equation}
From $\tau(\alpha, \beta)=\alpha/(\beta(1+\alpha))\to\tau_0\in(0,1)$, we have $\alpha=\tau_0\beta+o(1)$. Therefore,
$a_\beta=(1+\alpha)(1-\delta)=(1-\delta)+o(1)$,
$b_\beta=\delta(\beta-\alpha(1-\beta))=\delta (1-\tau_0)\beta +o(\beta)$, and $\frac{A_\beta}{\beta}=\frac{\alpha(1-\delta)}{\beta}=(1-\delta)\tau_0+o(1)$.
Since $\varphi_\lambda(1-\beta) = O(1)$, it follows that $a_\beta+b_\beta\varphi_\lambda(1-\beta)=(1-\delta)+o(1)$, so the product of the two denominator factors in~\eqref{eq: pi_diff} is
$(1-\delta)^2+o(1)$. Plugging these into~\eqref{eq: pi_diff} yields the reduction:
\begin{equation}
\label{eq:diff_appx}
\begin{gathered}
    \bar{\pi}_\infty^{\mathrm{global}}(\lambda,\beta)-\bar{\pi}_\infty^{\mathrm{global}}(1,\beta) = -\frac{\tau_0 (1-\tau_0) \delta}{1-\delta} \, \beta\big(\varphi_\lambda(1-\beta)-\varphi_1(1-\beta)\big) \\
    + o\left( \beta |\big(\varphi_\lambda(1-\beta)-\varphi_1(1-\beta)\big) | \right)\,.
\end{gathered}
\end{equation}
Since $\varphi_\lambda(1-\beta)  = \psi_\lambda(1-\beta)/(1-\beta)$, and $\beta\to 0$, Lemma~\ref{lem:perturbative} gives $\varphi_{1+x\sqrt \beta}(1-\beta) = 1-c(x)\sqrt \beta + o(\sqrt \beta)$, and therefore
\begin{equation*}
\varphi_{1+x\sqrt \beta}(1-\beta) - \varphi_1(1-\beta) = -\big(c(x)-c(0)\big)\sqrt \beta +o(\sqrt \beta) = -\big(c(x)-\sqrt 2\big)\sqrt \beta +o(\sqrt \beta)
\end{equation*}
Inserting this into~\eqref{eq:diff_appx} yields 
\begin{equation*}
\bar{\pi}_\infty^{\mathrm{global}}(\lambda,\beta)-\bar{\pi}_\infty^{\mathrm{global}}(1,\beta) = \frac{\tau_0 (1-\tau_0) \delta}{1-\delta} \, \beta^{3/2} \big(c(x) - \sqrt 2\big) + o(\beta^{3/2})\,.
\end{equation*}
Dividing by $\beta^{3/2}$ and letting $\beta\downarrow0$ gives the claimed limit.\qed

\section{Proof of Results in Section~\ref{sec: social_surplus}}
\subsection{Proof of Proposition~\ref{prop: equil_soc_surplus_discont}}
Suppose that initially at $\lambda=\lambda_0$, the common belief lies in $(\pi_{k,n},\pi_{k+1,n}]$, so that $k+1$ agents explore in equilibrium. By~\eqref{eq: asymmetric_equil_cond}, the belief cutoffs $\pi_{k,n}$ are increasing in $\lambda$, so there exists $\lambda_{k,n}(\pi)>\lambda_0$ at which $\pi=\pi_{k,n}$ and the equilibrium transitions to $k$ explorers. Part~\ref{item: Deltau_1} of Lemma~\ref{lem: Delta_u} gives the change in social surplus at $\lambda=\lambda_{k,n}(\pi)$ as $u_{k,n}(\pi)-u_{k+1,n}(\pi)=-\Delta u_k(\pi)$. Substituting $\pi=\pi_{k,n}$ into~\eqref{eq: Deltau_1} yields, at $p=\lambda_{k,n}(\pi)/n$,
\begin{equation*}
    u_{k,n}(\pi_{k,n}) - u_{k+1,n}(\pi_{k,n}) = -\delta\pi_{k,n}\beta p(\beta-\alpha(1-\beta))(n-1-k)(1-p\beta)^{k}\,,
\end{equation*}
which is always negative. Hence the equilibrium social surplus just above $\lambda_{k,n}(\pi)$ is strictly lower than just below it, confirming the discontinuous decline.
\subsection{Proof of Proposition~\ref{prop: asympt_equil_soc_surplus}}
For $\pi\leq\underline{\pi}$, no agent explores and the social surplus is zero, as in~\eqref{eq:ubar_inf_1}. In the intermediate region $\pi\in(\underline{\pi},\bar{\pi}_\infty^{\mathrm{local}})$, a fraction $k_n/n\to\kappa(\pi)$ of agents explore, as characterized in Proposition~\ref{prop: equil_num_explr_agents}. Since all terms with the $[\cdot]^+$ operator vanish in~\eqref{eq: social_welfare_func}, the average equilibrium social surplus on this region is
\begin{equation}
\label{eq: intermediate_social_surplus}
    \begin{gathered}
       \frac{u_{k_n,n}(\pi)}{n} = (1-\delta)\,\frac{k_n}{n}\left(\pi\beta-\alpha(1-\pi\beta)\right)+\delta\pi(\beta-\alpha(1-\beta))\beta\\
       -\delta\,\frac{k_n}{n}\,\pi(\beta-\alpha(1-\beta))(1-\beta)\BE_{k_n-1}^{(n)}\!\left[(1-\beta)^M\right]-\delta\,\frac{n-k_n}{n}\,\pi(\beta-\alpha(1-\beta))\,\BE_{k_n}^{(n)}\!\left[(1-\beta)^M\right].
    \end{gathered}
\end{equation}
Since $M\sim\mathsf{Bin}(k_n,\lambda/n)$ converges in distribution to $\mathsf{Poisson}(\lambda\kappa(\pi))$ as $n\to\infty$,
\begin{equation*}
    \lim_{n\to\infty}\BE_{k_n-1}^{(n)}\left[(1-\beta)^M\right]=\lim_{n\to\infty}\BE_{k_n}^{(n)}\left[(1-\beta)^M\right]=\ee^{-\lambda\beta\kappa(\pi)}\,.
\end{equation*}
Taking limits in~\eqref{eq: intermediate_social_surplus} and substituting $\kappa(\pi)$ from~\eqref{eq: kappa_expression} yield~\eqref{eq:ubar_inf_2}. 

By~\eqref{eq: local_exploration_threshold_limit}, $\bar{\pi}_\infty^{\mathrm{local}}$ is increasing in $\lambda$ with $\bar{\pi}_\infty^{\mathrm{local}}\uparrow\tau$ as $\lambda\to\infty$. Hence for every $\pi\in(\underline{\pi},\tau)$, there exists a threshold $\lambda(\pi)$ such that $\bar{\pi}_\infty^{\mathrm{local}} \geq \pi$ for all $\lambda\geq\lambda(\pi)$, so $\bar{u}_\infty$ follows~\eqref{eq:ubar_inf_2} and is constant in $\lambda$, proving the first claim in Part~\ref{enum:ubar_1}.

Finally, for $\pi\geq\bar{\pi}_\infty^{\mathrm{local}}$, all agents explore ($\kappa(\pi)=1$) and $M$ converges in distribution to $\mathsf{Poisson}(\lambda)$. Using this limiting result in~\eqref{eq: social_welfare_func} yields~\eqref{eq:ubar_inf_3}, completing the proof of Part~\ref{enum:ubar_1} and establishing Part~\ref{enum:ubar_2}.\qed
\subsection{Proof of Theorem~\ref{thm:ubar_mixed_infty}}
We divide the proof into a few steps:
\paragraph{Step 1 (social surplus representation).}Recall that the resulting number of explorers when each agent explores independently with probability $\mu^e_n$ is denoted by $k^{\mathrm{mixed}}_n \sim \mathsf{Bin}(n, \mu^e_n)$. Similar to~\eqref{eq: intermediate_social_surplus} the finite-$n$ average social surplus resulted from this group is 
\begin{equation*}
    \begin{gathered}
       \frac{u_{k^{\mathrm{mixed}}_n,n}(\pi)}{n} = (1-\delta)\,\frac{k^{\mathrm{mixed}}_n}{n}\left(\pi\beta-\alpha(1-\pi\beta)\right)+\delta\pi(\beta-\alpha(1-\beta))\beta\\
       -\delta\,\frac{k^{\mathrm{mixed}}_n}{n}\,\pi(\beta-\alpha(1-\beta))(1-\beta)\BE_{k^{\mathrm{mixed}}_n-1}^{(n)}\!\left[(1-\beta)^M\right]\\
       -\delta\,\frac{n-k^{\mathrm{mixed}}_n}{n}\,\pi(\beta-\alpha(1-\beta))\,\BE_{k^{\mathrm{mixed}}_n}^{(n)}\!\left[(1-\beta)^M\right]\,.
    \end{gathered}
\end{equation*}
Since $M\sim\mathsf{Bin}(k^{\mathrm{mixed}}_n,\lambda/n)$, applying the PGF of the Binomial distribution and setting $\xi_n\coloneq k^{\mathrm{mixed}}_n/n$ reduce the above expression to
\begin{equation*}
    \begin{gathered}
       \frac{u_{k^{\mathrm{mixed}}_n,n}(\pi)}{n} = (1-\delta)\,\xi_n\left(\pi\beta-\alpha(1-\pi\beta)\right)+\delta\pi(\beta-\alpha(1-\beta))\beta\\
       -\delta\,\xi_n\,\pi(\beta-\alpha(1-\beta))(1-\beta)(1-\lambda\beta/n)^{n\xi_n-1}
       -\delta\,(1-\xi_n)\,\pi(\beta-\alpha(1-\beta))\,(1-\lambda\beta/n)^{n\xi_n}.
    \end{gathered}
\end{equation*}
\paragraph{Step 2 (uniform convergence).} We show that $\gamma_n(x)\coloneq(1-\lambda\beta/n)^{nx}$ converges to $\gamma(x)\coloneq\ee^{-\lambda\beta x}$ uniformly over $x\in[0,1]$. Taking logarithms and applying the Taylor expansion $\log(1-\varepsilon)=-\varepsilon-\varepsilon^2/2+O(\varepsilon^3)$ gives
\begin{equation*}
    \log\gamma_n(x)-\log\gamma(x)=nx\log\!\left(1-\lambda\beta/n\right)+\lambda\beta x=-\frac{(\lambda\beta)^2 x}{2n}+x\,O(1/n^2)\,,
\end{equation*}
hence
\begin{equation*}
    \sup_{x\in[0,1]}\left|\log\gamma_n(x)-\log\gamma(x)\right|\leq   \frac{(\lambda\beta)^2}{2n}+O(1/n^2)\,.
\end{equation*}
For fixed $\lambda$ and $\beta$, the functions $\gamma_n$ and $\gamma$ are uniformly bounded over $n\in\mathbb{N}$ and $x\in[0,1]$. Since the exponential map $t\mapsto\ee^t$ is Lipschitz continuous on bounded sets, there exists a constant $L>0$, independent of $n$ and $x$, such that
\begin{equation*}
    \sup_{x\in[0,1]}\left|\gamma_n(x)-\gamma(x)\right|\leq L\sup_{x\in[0,1]}\left|\log\gamma_n(x)-\log\gamma(x)\right|\leq L\left(\frac{(\lambda\beta)^2}{2n}+O(1/n^2)\right).
\end{equation*}
Therefore $\lim_{n\to\infty}\sup_{x\in[0,1]}\left|\gamma_n(x)-\gamma(x)\right|=0$, establishing uniform convergence.
\paragraph{Step 3 (\emph{generalized} continuous mapping).} We invoke Theorem~5.27 of~\cite{kallenberg2021foundations}, which states that: For any metric spaces $S$, $T$ and set $C \in S$, consider some measurable functions $f, f_1, f_2,\ldots \colon S \to T$ satisfying $s_n \to s \in C \Rightarrow f_n(s_n) \to f(s)$. Then for any random elements $\xi_1, \xi_2, \ldots \in S$, $$\xi_n \stackrel{d}{\to} \xi \in C\, \mathrm{ a.s. } \Rightarrow f_n(\xi_n) \stackrel{d}{\to}f(\xi)\,.$$

In our setting, $\xi_n$ converges almost surely, and hence in distribution, to $\kappa(\pi)$. Let $C\coloneq[0,1]$ and define
\begin{equation*}
\begin{gathered}
    f_n(s)\coloneq(1-\delta)\,s\left(\pi\beta-\alpha(1-\pi\beta)\right)+\delta\pi(\beta-\alpha(1-\beta))\beta\\
       -\delta\,s\,\pi(\beta-\alpha(1-\beta))(1-\beta)(1-\lambda\beta/n)^{ns-1}
       -\delta\,(1-s)\,\pi(\beta-\alpha(1-\beta))\,(1-\lambda\beta/n)^{ns}\,,\\
    f(s)\coloneq(1-\delta)\,s\left(\pi\beta-\alpha(1-\pi\beta)\right)+\delta\pi(\beta-\alpha(1-\beta))\beta\\
       -\delta\,s\,\pi(\beta-\alpha(1-\beta))(1-\beta)\ee^{-\lambda\beta s}
       -\delta\,(1-s)\,\pi(\beta-\alpha(1-\beta))\,\ee^{-\lambda\beta s}\,.
\end{gathered}
\end{equation*}
For every convergent sequence $s_n\to s\in C$,
\begin{equation*}
    \left|f_n(s_n)-f(s)\right|\leq\left|f_n(s_n)-f(s_n)\right|+\left|f(s_n)-f(s)\right|\,.
\end{equation*}
By the previous step, $f_n$ converges uniformly to $f$ on $C$, so both terms on the right vanish as $n\to\infty$, giving $\lim_{n\to\infty}f_n(s_n)=f(s)$. By the above theorem, $f_n(\xi_n)\stackrel{d}{\to}f(\kappa(\pi))$. Since the limit $f(\kappa(\pi))$ is a constant, convergence in distribution implies almost sure convergence: $f_n(\xi_n)\stackrel{\mathrm{a.s.}}{\to}f(\kappa(\pi))$. Finally, substituting $\kappa(\pi)$ from~\eqref{eq: kappa_expression} into $f(\cdot)$ gives $f(\kappa(\pi))=\bar{u}_\infty$.\qed

\subsection{Proof of Lemma~\ref{lem: B_A_inequality}}
We separately show $B_k$ is larger than both of the arguments of the max operator. First, observe that $B_k \geq (1-\beta)^r A_k$ if and only if
\begin{equation}
\label{eq: B_A_r}
\begin{gathered}
0 \leq k \Big[Q_{k-1}(r-1)(1-\beta)^{r}+\sum_{m=r}^{k-1}q_{k-1}(m)(1-\beta)^{m+1} \Big] \\
-(k+1)\Big[Q_k(r-1)(1-\beta)^{r}+\sum_{m=r}^k q_k(m)(1-\beta)^{m+1} \Big] \\
+(n-k)\Big[Q_k(r)(1-\beta)^r+\sum_{m=r+1}^k q_k(m)(1-\beta)^m\Big]\\
-(n-k-1)\Big[Q_{k+1}(r)(1-\beta)^r +\sum_{m=r+1}^{k+1}q_{k+1}(m)(1-\beta)^m \Big]\,.
\end{gathered}
\end{equation}
Below, $\BE_k$ denotes expectation with respect to $\mathsf{Bin}(k,p)$, where $M$ follows the distribution indicated by the subscript.\footnote{The distribution of $M$ thus varies across terms.}
\begin{equation*}
    \begin{gathered}
    \text{RHS } \text{of \eqref{eq: B_A_r}} =k \BE_{k-1}\left[(1-\beta)^{(M+1) \vee r}\right]-(k+1)\BE_k\left[(1-\beta)^{(M+1)\vee r}\right]\\
    +(n-k) \BE_k\left[(1-\beta)^{M \vee r}\right]- (n-k-1)\BE_{k+1}\left[(1-\beta)^{M \vee r}\right]\,.
    \end{gathered}
\end{equation*}
Since each function inside the expectation operators is decreasing in $M$, first-order stochastic dominance---$\mathsf{Bin}(k,p)\succeq\mathsf{Bin}(k-1,p)$ for the first line and $\mathsf{Bin}(k+1,p)\succeq\mathsf{Bin}(k,p)$ for the second---yields the lower bound:
\begin{equation*}
    \begin{aligned}
        \textit{RHS } \text{of \eqref{eq: B_A_r}} &\geq \BE_{k+1}\left[(1-\beta)^{M \vee r}\right]-\BE_k\left[(1-\beta)^{(M+1)\vee r}\right]\\
        &=p \BE_k \left[(1-\beta)^{(M+1) \vee r}\right]+ (1-p)\BE_k \left[(1-\beta)^{M \vee r}\right]-\BE_k\left[(1-\beta)^{(M+1)\vee r}\right]\\
        &=(1-p)\BE_k\left[(1-\beta)^{M \vee r}-(1-\beta)^{(M+1)\vee r}\right] \geq 0\,,
    \end{aligned}
\end{equation*}
where the second line follows by conditioning $\mathsf{Bin}(k+1,p)$ on its last Bernoulli trial. This establishes the first part.

For the second inequality, $B_k\geq(1-\beta)^{r+1}A_k$, an equivalent condition is:
\begin{equation}
\label{eq: B_A_r+1}
\begin{gathered}
0 \leq k \Big[Q_{k-1}(r-1)(1-\beta)^{r+1}+\sum_{m=r}^{k-1}q_{k-1}(m)(1-\beta)^{m+1} \Big] \\
-(k+1)\Big[Q_k(r-1)(1-\beta)^{r+1}+\sum_{m=r}^k q_k(m)(1-\beta)^{m+1} \Big] \\
+(n-k)\Big[Q_k(r)(1-\beta)^{r+1}+\sum_{m=r+1}^k q_k(m)(1-\beta)^m\Big]\\
-(n-k-1)\Big[Q_{k+1}(r)(1-\beta)^{r+1} +\sum_{m=r+1}^{k+1}q_{k+1}(m)(1-\beta)^m \Big]\,.
\end{gathered}
\end{equation}
Consolidating the right-hand side yields: 
\begin{equation*}
    \begin{gathered}
        \textit{RHS } \text{of \eqref{eq: B_A_r+1}} = (1-\beta)\Big(k\BE_{k-1}\left[(1-\beta)^{M \vee r}\right]-(k+1)\BE_k\left[(1-\beta)^{M \vee r}\right] \\
        +(n-k)\BE_k\left[(1-\beta)^{(M-1)\vee r}\right]-(n-k-1)\BE_{k+1}\left[(1-\beta)^{(M-1)\vee r}\right]
        \Big)\,.
    \end{gathered}
\end{equation*}
Using the first-order stochastic dominance once again yields the following lower bound:
\begin{equation*}
    \begin{aligned}
    \textit{RHS } \text{of \eqref{eq: B_A_r+1}} &\geq (1-\beta) \Big(\BE_{k}\left[(1-\beta)^{(M-1)\vee r}\right]-\BE_k\left[(1-\beta)^{M \vee r}\right]\Big)\\
    &=(1-\beta) \BE_k\left[(1-\beta)^{(M-1) \vee r} - (1-\beta)^{M \vee r}\right] \geq 0\,.
    \end{aligned}
\end{equation*}
This establishes the second part and concludes the proof.\qed


\subsection{Proof of Theorem~\ref{thm: social_optimum}}
We first justify the lower cutoff for full exploitation. By part~\ref{item: Deltau_1} of Lemma~\ref{lem: Delta_u}, $\Delta u_0\leq 0$ on $\pi\leq\underline{\pi}^*$ and $\Delta u_0>0$ on $\underline{\pi}^*<\pi\leq\tau$. Since $\Delta u_k$ is decreasing in $k$ on $[0,\tau]$ by Lemma~\ref{lem: decreasing_marginals}, we have $\Delta u_k\leq\Delta u_0\leq 0$ on $\pi\leq\underline{\pi}^*$, establishing the optimality of full exploitation in this region. When $\frac{\pi}{1-\pi}\in\left[\frac{\alpha}{\beta-\alpha(1-\beta)},\frac{\alpha}{(\beta-\alpha(1-\beta))(1-\beta)}\right]$, part~\ref{item: Deltau_3} of Lemma~\ref{lem: Delta_u} (with $r=0$) identifies $\Delta u_0(\pi)$. If $A_0\geq 0$, then $B_0\geq 0$ by Lemma~\ref{lem: B_A_inequality}, and hence
\begin{equation*}
\begin{gathered}
    \Delta u_0(\pi)= (1-\delta)(\pi \beta-\alpha(1-\pi \beta)) + \delta \Big(\pi (\beta-\alpha(1-\beta))\,B_0-\alpha(1-\pi)A_0\Big) \\
    \geq \delta \pi (\beta-\alpha(1-\pi)) \left(B_0 -\frac{\alpha(1-\pi)}{\pi(\beta-\alpha(1-\beta))}\, A_0\right) \geq \delta \pi (\beta-\alpha(1-\pi)) (B_0 -A_0) \geq 0\,.
\end{gathered}
\end{equation*}
where the second to-last inequality holds because, $\pi \geq \tau$.

Alternatively, if $A_0<0$, then by Lemma~\ref{lem: B_A_inequality},
\begin{equation*}
\begin{gathered}
    \Delta u_0(\pi) \geq \delta \big(\pi (\beta-\alpha(1-\beta))B_0-\alpha(1-\pi)A_0\big) \\
    =\delta \pi (\beta-\alpha(1-\pi)) \left(B_0 -\frac{\alpha(1-\pi)}{\pi(\beta-\alpha(1-\beta))}\, A_0\right)\\
    \geq \delta \pi (\beta-\alpha(1-\pi)) \left(B_0 - (1-\beta) A_0\right) \geq 0\,,
\end{gathered}
\end{equation*}
where the last inequality holds because $\frac{\pi}{1-\pi} \leq \frac{\alpha}{(\beta - \alpha(1-\beta))(1-\beta)}$.

Finally, when $\frac{\pi}{1-\pi}>\frac{\alpha}{(\beta-\alpha(1-\beta))(1-\beta)}$, part~\ref{item: Deltau_2} of Lemma~\ref{lem: Delta_u} implies $\Delta u_0(\pi)>0$. We conclude that $u_1(\pi)>u_0(\pi)$ for all $\pi>\underline{\pi}^*$, so full exploitation is optimal if and only if $\pi\leq\underline{\pi}^*$.

Next, we establish the optimality of full exploration above $\bar{\pi}^*$. On the region $\pi\leq\tau$, part~\ref{item: Deltau_1} of Lemma~\ref{lem: Delta_u} shows that $\Delta u_{n-1}(\pi)<0$ for $\pi<\bar{\pi}^*$ and $\Delta u_{n-1}(\pi)\geq 0$ for $\pi\in[\bar{\pi}^*,\tau]$. By Lemma~\ref{lem: decreasing_marginals}, $\Delta u_k(\pi)\geq\Delta u_{n-1}(\pi)\geq 0$ for all $\pi\in[\bar{\pi}^*,\tau]$. For $\pi\geq\tau$, part~\ref{item: Deltau_2} of Lemma~\ref{lem: Delta_u} directly gives $\Delta u_k(\pi)\geq 0$. Hence, full exploration is socially optimal if and only if $\pi\geq\bar{\pi}^*$.

It remains to show $\Delta u_k(\pi)\geq 0$ for every $\pi>\tau$. For each such $\pi$, there exists $r$ such that $\frac{\alpha}{(\beta-\alpha(1-\beta))(1-\beta)^r}\leq\frac{\pi}{1-\pi}\leq\frac{\alpha}{(\beta-\alpha(1-\beta))(1-\beta)^{r+1}}$. If $k<r$, part~\ref{item: Deltau_2} gives $\Delta u_k>0$ directly. If $k\geq r$ and $A_k\geq 0$, Lemma~\ref{lem: B_A_inequality} implies
\begin{equation*}
\begin{gathered}
    \Delta u_k(\pi) = (1-\delta)(\pi \beta-\alpha(1-\pi \beta)) + \delta \big(\pi (\beta-\alpha(1-\beta))B_k-\alpha(1-\pi)A_k\big)\\
    \geq \delta \pi (\beta-\alpha(1-\beta)) \left(B_k -\frac{\alpha(1-\pi)}{\pi(\beta-\alpha(1-\beta))}\, A_k\right)\\
    \geq \delta \pi (\beta-\alpha(1-\beta)) \left(B_k - (1-\beta)^r A_k\right)\,.
\end{gathered}
\end{equation*}
Similarly, if $A_k<0$, then
\begin{equation*}
\begin{gathered}
    \Delta u_k(\pi) \geq \delta \big(\pi (\beta-\alpha(1-\beta))B_k-\alpha(1-\pi)A_k\big) \\
    =\delta \pi (\beta-\alpha(1-\beta)) \left(B_k -\frac{\alpha(1-\pi)}{\pi(\beta-\alpha(1-\beta))}\, A_k\right)\\
    \geq \delta \pi (\beta-\alpha(1-\beta)) \left(B_k - (1-\beta)^{r+1} A_k\right) \geq 0\,.
\end{gathered}
\end{equation*}
This confirms that $u_k$ is increasing in $k$ on $[\bar{\pi}^*,1]$, completing the proof.\qed

\setcitestyle{numbers}	 
\bibliographystyle{normalstyle}
\bibliography{ref}

\end{document}